\myurl\url{foo%.com}
\usepackage{hyperref}

\newtheorem{definition}{Definition}[section]
\newtheorem{theorem}{Theorem}[section]
\newtheorem{proposition}{Proposition}[section]
\newtheorem{lemma}{Lemma}[section]
\newtheorem{informal-proposition}{Informal proposition}[section]
\newtheorem{problem}{Problem}[section]

\newtheorem{unknown-proposition}{unknown-Proposition}[section]

\newtheorem{example}{Example}[section]
\newtheorem{question}{Question}[section]

\newcommand{\real}{\mathbb{R}}

\newcommand{\sg}{\mathsf{s}}

 \newcommand{\leftmult}{~
     \mathbin{\setlength{\unitlength}{1ex}
     \begin{picture}(1.4,1.8)(-.3,0)
     \put(-.6,0){$\times$}
     \put(-.46,-0.0){\circle{0.4}}
     \end{picture}
     }}

\begin{document}

\title{\bf Expressing entropy and cross-entropy in expansions of common meadows\\}

\author{Jan A Bergstra \\
Informatics Institute, University of Amsterdam,\\
 Lab 42, Science Park 900, 1098 XH, Amsterdam, 
The Netherlands\\
j.a.bergstra@uva.nl\\ \and \\
John V Tucker\\
Department of Computer Science,\\ Swansea University, Bay Campus, \\Fabian Way, Swansea, SA1 8EN, United Kingdom\\j.v.tucker@swansea.ac.uk }

\maketitle

\abstract{\noindent A common meadows is an enrichment of a field with a partial division operation that is made total by assuming that division by zero takes the a default value, a special element $\bot$ adjoined to the field.   To a common meadow of real numbers we add a binary logarithm $\log_2(-)$, which we also assume to be total with $\log_2(p) = \bot$ for $p \leq 0$.  With these and other auxiliary operations, such as a sign function $\sg(-)$, we form algebras over which entropy and cross entropy can be defined for probability mass functions on a finite sample space by algebraic formulae that are simple terms built from the operations of the algebras and without case distinctions or conventions to avoid partiality.  The discuss the advantages of algebras based on common meadows, whose theory is established, and alternate methods to define entropy and other information measures completely for all arguments using single terms.}
\medskip

\noindent {\bf Keywords and phrases}:
meadows, common  meadows, partiality, entropy, cross entropy, terms, conditional operators, auxiliary operators

\section{Introduction}

\subsection{Information theory.}
We will investigate how entropy and cross-entropy, and other information theoretic qualities, can be defined uniformly and completely for all arguments by formulae or algebraic expressions over algebras of reals numbers.  By a formula, or algebraic expression, we have in mind a single term built from names for the basic operators of an algebra. 

Consider the entropy of a probability mass function $P$ on a finite sample space $V$ as given by 
$$H(P) = \sum_{x \in V}(P(x) \cdot \log_2 \frac{1}{P(x)}).$$
Now, in case $P(x)=0$ for some $x \in V$, the expression for entropy contains a subterm $\frac{1}{P(0)}$ that is not defined.  Since $P(x)=0$ is a reasonable value the definition is not complete. An alternate formulation for entropy is
$$H(P) = -\sum_{x \in S}(P(x) \cdot \log_2 P(x)).$$
Here,  $P(x)=0$ means that  the subterm $\log_2 P(x)$ is not defined. 

In most practice, the facts that the formulae do not completely define the measures are not considered problematic.  Rather than introduce case distinctions, these definitions are finished off -- indeed, the second definition is preserved -- by adopting a convention, namely that $0 \cdot \log_2 0= 0$. Note that such a convention is an equation that might be true or false of some underlying algebra of reals with the operators $\cdot$ and $\log_2$.

A closely related concept is cross entropy defined for two probability mass functions, say $P$ and $Q$, by
$$H(P,Q) = \sum_{x \in S}(P(x) \cdot \log_2 \frac{1}{ Q(x)})$$
Alternately, it can be formulated as
$$H(P,Q) = -\sum_{x \in S}(P(x) \cdot \log_2  Q(x)).$$
In most practice, conventions in use here include the use of a special element added to the real numbers, namely $\infty$.

The enforcement of conventions such as an equation $0 \cdot \log_2 0= 0$ and the addition of a new element $\infty$, have implications for the algebra and logic of the information measures. Here we seek answers to the question:

\begin{question}\label{term_definability_problem}
What algebras of real numbers would allow entropy and cross entropy to be defined uniformly and completely for all arguments by formulae that are single terms over the operations of the algebras? 
\end{question}

Note that since $H(P) = H(P,P)$, we expect if cross entropy is definable then so is entropy.

We have explored this question of definability elsewhere. In \cite{BergstraT2025Entropy}, we developed an algebra $A$ of real numbers in which the convention $0 \cdot \log_2 0= 0$ is an equation true of $A$; this algebra $A$ we called the \textit{entropic transreals}. The entropic transreals are a modified version of the establised transreals of~\cite{Anderson2007VG,AndersonVA2007VG,DosReisGA2016IJAM,DosReis2019TM,BergstraT2020TM}. Unfortunately, entropic transreals have non-associative multiplication, and lack an important algebraic property of division (\textit{viz.} fracterm flattening), both facts giving reasons for looking for expressions of entropy and cross-entropy in other structures for real arithmetic with more conventional properties.  


Here, we will examine the problem in a fundamental way,  by choosing to start with common meadows, whose theory is established and
based on insights into the semantics, algebra and logic, and practical use of abstract data types for computer calculation.


\subsection{Totality and term definability}\label{Totality_and_definability}

Specifically, the technical problems arise because the key operations are partial:  $\frac{x}{0}$ is not defined and $\log_2{x}$ is not defined for $x\leq 0$. Thus, to avoid these undefined arguments, which disturbs the formulae and suggests problem \ref{term_definability_problem}, a conditional branching operator seems necessary. 

Our previous studies of totality \textit{v.} partiality in algebras have focussed on division $\frac{x}{y}$: by adding this partial operator to the field of reals we obtain the \textit{meadow} of real numbers \cite{BergstraT2007JACM}. Having studied a number of semantic options for making partial functions total (see \cite{BergstraT2023}), one has stood out. 
We add a special element $\bot$, define $\frac{x}{0} = \bot$, and how $\bot$ interacts with the operations on the reals. The standard approach to the algebra of $\bot$ is to make it absorptive. This constructs a \textit{common meadow} of real numbers. 
Thus, common meadows are our starting point for this problem. 

Division is so basic it can be considered \textit{primus inter pares} among functions that need to be added to the field operations. For the purposes of our problem with information measures, we must add $\log_2$ to the common meadow of real numbers. This produces an algebra of the form
$$\real_{ \bot,\log_2} = (\real \cup \{ \bot \} \ | \ 0, 1, \bot, x+y, -x, x \cdot y, \frac{x}{y}, \log_2(x))$$
Here, to make the algebra total, we set $\log_2(x) = \bot$ for $x\leq 0$.  Clearly, this algebra has the all operators that appear in the standard formulae above. We confirm there is a problem (Proposition \ref{basic_incompleteness} below):
\medskip

\noindent \textbf{Theorem.} {\em The algebra $A = \real_{ \bot,\log_2}$  {\em fails} to allow entropy $H(P)$ for a sample space of size $n=2$ to be defined by a term over $A$}. 
\medskip
 
Then, to search for uniform and complete formulae for entropy and cross entropy, we will add more operations $f_1, \ldots, f_n$ to this basic algebra $\real_{ \bot,\log_2}$ and examine the problem for several algebras of the form
$$\real_{ \bot,\log_2} = (\real \cup \{ \bot \} \ | \ 0, 1, \bot, x+y, -x, x \cdot y, \frac{x}{y}, \log_2(x), f_1, \ldots, f_n).$$
Among the additional operations, we focus on adding a conditional operator, a new multiplication operator, and a sign function, which simulates an ordering.  


\subsection{Structure of the paper.}
In Section \ref{common_meadows}, we discuss the method of totalising partial operations with $\bot$, whose semantics is open to informal interpretations, and we describe the first algebras of reals we use. 

In Section \ref{terms_etc}, we discuss some algebraic points, and using a conditional operator, and a multiplication operator derived from it, will give term expressions for entropy and for cross-entropy. 

In Section \ref{needed_auxiliary_operators}, we show that operators need to be added to $\real_{ \bot,\log_2}$.

In Section \ref{general_approach}, we re-define the  approach and examine some auxiliary operators, notably a sign function $\sg$.

In Section \ref{observations} we look at other information measures.

In Section \ref{flattening} we consider the algebraic properties of flattening.

In Section \ref{concluding_remarks} we consider some matters arising from the paper.


\section{Totality and the common meadows}\label{common_meadows}

We summarise the algebraic ideas we need.


\subsection{On the default peripheral value $\bot$}\label{default_peripheral_value}

Consider an algebra enlarged with a special element $\bot$, which may be read as an `unconventional outcome'. We will refer to it as a \textit{default peripheral value} (DPV); `default'  because it can be used as an output by operators that fail to give a meaningful value for some reason, and as `peripheral' as it is extraneous to the algebra.  Mathematically, its primary role is to totalise functions that otherwise would be partial.  However, on introducing $\bot$ to an existing algebra we must define how  $\bot$ interacts with its operations.  A general method  is to make $\bot$ \textit{absorptive}, adapting the operations to return $\bot$ whenever $\bot$  is an argument. This general technique is worked out in \cite{BergstraT2021b}.

The advantage of total functions over partial functions is the relative simplicity of working with the algebra and logic of total functions, in comparison with working with algebras and logics for partial functions. Appreciation of this advantage can be a controversial matter, but we hold that a first order logic of total functions is significantly simpler and applicable than any of its various counterparts involving logics of partial functions.

The general idea of a DPV is intended to be adaptable to contexts. Now, many functions that are partial acquire conventions to complete their definition that are not seen to be \textit{ad hoc}. Certainly, the element $\infty$ is commonly employed: for example, since we know $\lim \frac{1}{x} \to \infty$ as $x \to 0$ we have a reason to choose to totalise division by setting $\frac{1}{0} = \infty$. Further, we can have the intuition that an infinity can be absorptive: operations return $\infty$ whenever $\infty$ is an argument.  However, adding a new element to an algebra with some desired intuition requires careful consideration of how the new element interacts with all the operations of the algebra. So, if $\infty$ is a renaming of $\bot$ then we define $-\infty = \infty$ (and we do not have $+\infty$ and $-\infty$). Using this $\infty$ instead of $\bot$ is just \text{one} form of infinity; it can be called a \textit{pure absorptive infinity}. Infinity can interact with the operations in other ways even in a common meadow: we will return to this topic in Section \ref{peripherals4infinity}.

Clearly, in such circumstances we might choose to use the symbol $\infty$ to do the algebraic work that $\bot$ is designed to do, \textit{viz.} when a natural ordinary value for a term $t$ cannot be found we have $t = \bot$.  To consider absorptive infinity in place of $\bot$ is an option for the definitions of entropy and cross entropy in this paper, since the symbol $\infty$ appears in some conventions to be found in information theory. It is not an option we will take.

Below we will use $\bot$ instead of $\infty$ as all arguments are best understood with the intuition that $f(a) = \bot$ expresses that within $\real$, $f(a)$ is undefined. This notational difference is strictly speaking unnecessary because as a function $f(-)$ is immune for a change of name of $\infty$. We will use $H_\bot(P)$ and $H_\bot(P,Q)$ for entropy and cross-entropy but such that the outcome will be $\bot$ rather than $\infty$ if the outcome is a peripheral number.

In~\ref{peripherals4infinity}, we provide a brief survey of peripherals for infinite values, with a naming convention for these peripherals. 



\subsection{An algebra of real numbers for entropy definitions}\label{algebra_real_numbers}

There are dozens of functions on the real numbers that are essential for applications and yet are naturally partial. 
A most obvious case is division $\frac{x}{y}$: by adding  this as a basic operation to the field we obtain the \textit{meadow} of real numbers \cite{BergstraT2007JACM}. Since division is partial at $x=0$ we add the peripheral value $\bot$ to make it total, i.e., $\frac{x}{0} = \bot$. 
Assume that a field $\real$ of real numbers is given with the classic operations \cite{vanderWaerden1970}:
$$x+y, \ x\cdot y, \ -x.$$ 
The peripheral $\bot$ interacts with these operations and division using the general absorptive method, adapted to the operations, such that 
$$-\bot = x+\bot = \bot + x = x \cdot \bot = \bot \cdot x =  \frac{x}{\bot} = \frac{\bot}{x} = \bot$$ 
This constructs the \textit{common meadow} of real numbers:
$$\real_{ \bot} = (\real \cup \{ \bot \} \ | \ 0, 1, \bot, x+y, -x, x \cdot y, \frac{x}{y}).$$
Common meadows as a general construction for fields were introduced in~\cite{BergstraP2015LNCS}. The equational axiomatisations and the equational theory of common meadows is analysed in~\cite{BergstraT2022TCJ,BergstraT2023,BergstraT2024a}.

To proceed, we now add the partial function $\log_2(x)$.
$$\real_{ \bot,\log_2} = (\real \cup \{ \bot \} \ | \ 0, 1, \bot, x+y, -x, x \cdot y, \frac{x}{y}, \log_2(x)).$$

%
%

The problem we posed in the Introduction, Question \ref{term_definability_problem},  can be stated precisely as follows:
\begin{question}\label{question1}
Can the information-theoretic measures of entropy and cross entropy be defined uniformly for all inputs as single terms over this algebra $A = \real_{ \bot,\log_2}$ of reals, or over some of its expansions?
\end{question}


\section{Defining information measures as terms}\label{terms_etc}

Here we make a first attempt at defining the measures.


\subsection{Preliminaries on terms}\label{preliminaries}

Our work in this paper is founded upon concepts and mehods from abstract data type theory: signatures, terms, equations etc. \cite{EhrichWL1997}. The general question in the Introduction is a natural question to do with data types, for the technical Question \ref{question1} has the classic form of a hidden operator problem in data type theory.

However, in order to keep close to the familiar notions and formulae of entropy, we will work without being explicit about syntax and semantics, which are core to analysing data types. The users of information theory are comfortable with this informal way.  At some places it may be helpful to remember the role of syntax in making arguments about terms more precise; we trust the step-up in formality will be easy when needed. 

Now, mindful of the classic formulae, we must make a remark on general summations. Let $t$ be a term over a signature with $+$; suppose $+$ is both commutative and associative. The construction $\sum_{x  \in V} t$ is available for any finite set $V$ as follows: 

\begin{definition}
For a given finite set $V$ of cardinality $n = |V|$, with enumeration
without repetitions  $a_1,\ldots, a_n$, with tha $a_i$ closed terms, the semantics of $\sum_{x  \in V} t$ is given by the term
$$\sum_{x  \in V} t = [a_1/x]t + [a_2/x]t +\ldots +[a_n/x]t.$$
\end{definition}

This construction is independent of the choice of the enumeration because addition is both commutative and associative. We will refer to this construction as a {\em finitary generalized sum}. 


\subsection{Additional operators: conditional and left-sequential multiplication}\label{two_conditional_operators}
One approach to integrating the exceptional case of $P(x) = 0$ is to add operations to the algebra of reals that can create a larger class of terms with which case distinctions can be made. 

\begin{definition}
A {\em conditional operator}  $ x \lhd y \rhd z$ uses as the test $y=0$ and is defined by:

if $y = 0$ then $ x \lhd y \rhd z=z$; 

if $y \neq 0$ and $y \neq \bot$ then 
$x \lhd y \rhd z=x$; and

if $y = \bot$ then $ x \lhd  \bot \rhd z= \bot$.
\end{definition}

Notice that the conditional operator compromises the absorptive character of $\bot$ as $\bot\! \lhd 0 \rhd 1 = 1$ rather than $\bot\! \lhd 0 \rhd 1 = \bot$.

Using the conditional we can define a new `multiplication' operation on the reals. 
\begin{definition}
The operator $x \leftmult y$, which we call the {\em left-sequential multiplication}, or simply {\em sequential multiplication},  is defined by
$$ x \leftmult y = (x \cdot y) \lhd x \rhd 0.$$
\end{definition}

The idea is that the left (first) argument of a product is evaluated first and is used to obtain a result.
If $x = 0$ then $ x \leftmult y = 0$, or if $x \neq 0$ then $ x \leftmult y = x \cdot y$.


\subsection{Entropy and cross-entropy terms with conditionals}\label{conditionals}

\begin{definition}
A probability mass function on a sample space $V$ is a function $P: V \to \real$ with (i) $\forall_{x \in V}P(x) \geq 0$ and (ii) $\sum_{x  \in V} P(x) = 1$. 
\end{definition}

With the algebraic operations of \ref{two_conditional_operators}, we construct the algebra
$$\real_{ \bot,\log_2,\leftmult} = (\real \cup \{ \bot \} \ | \ 0, 1, \bot, x+y, -x, x \cdot y, \frac{x}{y}, \log_2(x), x \leftmult y).$$
It is easy to check that with a conditional to hand the classical formulae can be made into terms, notably using the operator $\leftmult$:

\begin{proposition}
The Shannon entropy  $H(P)$ of a probability mass function $P$ can be defined as terms over $A$
 $$H(P)= \sum_{x\in V}^{}P(x) \leftmult (\log_2 \frac{1}{ P(x)})$$
Equivalently, Shannon entropy can be defined by the term:
 $$H(P)= -\sum_{x\in V}^{}P(x) \leftmult \log_2  P(x)$$
 Cross entropy is given by:
  $$H(P,Q)= \sum_{x\in V}^{}P(x) \leftmult \log_2 \frac{1}{ Q(x)}$$
  or equivalently: 
 $$H(P,Q)= -\sum_{x\in V}^{}P(x) \leftmult \log_2  Q(x).$$
\end{proposition}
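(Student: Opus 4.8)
The plan is to verify the four formulae by a pointwise analysis of the individual summands, exploiting the behaviour of $\leftmult$ recorded immediately after its definition: the summand vanishes exactly when its left (probability) factor is $0$, and otherwise reduces to an ordinary product. Since addition in a common meadow is absorptive for $\bot$, it then suffices to show that each summand takes the intended classical value, or, in one cross-entropy case, the intended peripheral value $\bot$.

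First, for the entropy formula $H(P)= \sum_{x\in V} P(x) \leftmult (\log_2 \frac{1}{P(x)})$, I would fix $x \in V$ and split on the value of $P(x)$, which by the definition of a probability mass function satisfies $P(x) \geq 0$. If $P(x) > 0$, then $\frac{1}{P(x)}$ is an ordinary positive real, $\log_2 \frac{1}{P(x)}$ is defined, and by the property of $\leftmult$ the summand equals $P(x) \cdot \log_2 \frac{1}{P(x)}$, matching the classical term. If $P(x) = 0$, then $\frac{1}{P(x)} = \bot$ and hence $\log_2 \frac{1}{P(x)} = \bot$; but since the left factor is $0$, the definition of $\leftmult$ gives $0 \leftmult \bot = 0$, which is precisely the value dictated by the convention $0 \cdot \log_2 0 = 0$. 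Summing over $V$ yields $H(P)$.

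The alternate form $-\sum_{x\in V} P(x) \leftmult \log_2 P(x)$ is handled by the same case split, now using $\log_2 0 = \bot$ absorbed by the zero left factor; the equivalence of the two forms follows summand by summand from the identity $\log_2 \frac{1}{p} = -\log_2 p$ for $p > 0$ together with $-0 = 0$ in the vanishing case. The two cross-entropy formulae are verified identically, replacing $\frac{1}{P(x)}$ and $\log_2 P(x)$ by $\frac{1}{Q(x)}$ and $\log_2 Q(x)$ respectively.

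The only point requiring care, and the closest thing to an obstacle, is the cross-entropy case in which $P(x) > 0$ while $Q(x) = 0$. Here $\log_2 \frac{1}{Q(x)} = \bot$, and since the left factor is nonzero, the definition of $\leftmult$ gives $P(x) \cdot \bot = \bot$; absorption then propagates $\bot$ through the sum, so $H(P,Q) = \bot$. This is the intended outcome, corresponding to the conventional value $\infty$ and consistent with the notation $H_\bot(P,Q)$ introduced earlier. Thus the peripheral value correctly records the divergence of cross entropy when $Q$ assigns zero probability to an outcome on which $P$ is positive, and no further case distinctions are needed.
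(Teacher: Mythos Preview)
Your proof is correct and follows essentially the same approach as the paper: a pointwise case split on whether the left (probability) factor is $0$ or positive, using the defining behaviour of $\leftmult$ to reduce each summand to the classical term or to $0$. The paper's proof is briefer, illustrating only the second entropy formula via the decomposition $\sum_{x:P(x)>0} P(x)\cdot\log_2 P(x) + \sum_{x:P(x)=0} 0$, whereas you work through all four formulae and also treat explicitly the divergent cross-entropy case $P(x)>0$, $Q(x)=0$ (which the paper relegates to a subsequent example rather than the proof itself).
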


\begin{proof}
We consider one of the measures for illustration, in particular the second definition of entropy:
$$H(P)= 
-(\sum_{x\in V, P(x) >0}^{}P(x) \cdot \log_2  P(x)~
+\sum_{x\in V, P(x) =0}^{}0 ) 
=-\sum_{x\in V}^{}P(x) \leftmult \log_2  P(x).$$
\end{proof}

This fact is an algebraic tool in what follows. Returning to the discussion of the neutral role of $\bot$ in Section~\ref{default_peripheral_value} and its possible renaming by $\infty$, let us look at some examples.
 
\begin{example}
{\em With $\real_{\infty_\mathsf{cmu}/\bot}$ we denote a common meadow of reals where $\bot$ has been renamed into $\infty_\mathsf{cmu}$. 
Upon expansion with $\log_2$, and with left sequential multiplication $\leftmult$, we obtain the structure $\real_{\infty_\mathsf{cmu}/ \bot,\log_2,\ \leftmult}$. Assume that for 
$$p \leq 0, \  \log_2 p = \infty_\mathsf{cmu}.$$ 
Then $\leftmult$ works as follows: 
\begin{center}
$a \leftmult b = 0$ if $a = 0$ and $a \leftmult b = a \cdot b$ if $a \neq 0$. 
\end{center}
We note that under these conditions,  $0 \leftmult \infty_\mathsf{cmu} = 0$ so that $\infty$ is not fully absorptive. }
\end{example}

\begin{example}\label{example2}
 {\em Suppose $P$ and $Q$ are probability mass functions with the property that for some $x \in V$, $P(x) = \frac{1}{2}$ and $Q(x)= 0$. Then 
 $$P(x) \leftmult \log_2 \frac{1}{ Q(x)} = \frac{1}{2} \leftmult \log_2 \frac{1}{ 0} = \frac{1}{2} \cdot \log_2 \frac{1}{ 0}= 
   \frac{1}{2} \cdot \log_2 \infty_\mathsf{cmu}= \frac{1}{2} \cdot  \infty_\mathsf{cmu} = \infty_\mathsf{cmu}$$ 
so that $H(P,Q) = \infty_\mathsf{cmu}$. 

It follows that the presence of an infinite value $\infty_\mathsf{cmu}$, 
(or $\bot$ for that matter) in a common meadow of reals, 
is not only relevant for totalising division and logarithm but that totalisation also   
matters for the expression of cross-entropy, because, according to the above calculation, without use of $\infty_\mathsf{cmu}$ (or $\bot$) 
cross-entropy will be a partial function.}
\end{example}

For a function $F : V \to \real$ from a finite collection of atomic events $V$  and a probability mass function $P$, the sequential expected value $\overline{E}^V_{P}(F)$ is given by:
  $$\overline{E}^V_{P}(F)= \sum_{x\in V}^{}P(x) \leftmult F(x).$$

 \noindent We notice that sequential expected value thus defined has the following properties:

(i) if for some $ x\in V$, $F(x) = \bot$ while $P(x) \neq 0$ then $\overline{E}^V_{P}(F)= \bot$;

(ii) otherwise $\overline{E}^V_{P}(F)$ has an ordinary value (i.e., a value different from $\bot$).
 
 Some readers will know these formulae with $\infty$ in place of $\bot$.
 We notice that the various expressions for entropy and cross-entropy  are each instances of
    sequential expected value; for instance, in the case of entropy:
  $$H(P)= \overline{E}^V_{P}( \log_2 \frac{1}{ P(x)}).$$


\section{The need for auxiliary operators}\label{needed_auxiliary_operators}

A meadow extended with the addition of $\log_2$ has all the functions needed to make the classical information formulae, but the algebra $\real_{ \bot,\log_2}$ cannot avoid the partiality problem, even with its total operations.

\begin{proposition}\label{basic_incompleteness}
Let $A =\real_{ \bot,\log_2}$. Then $A$ does not allow entropy $H^V_\bot(P)$ of a probability mass function $P$ for a sample space $V$ of size $n=2$ to be defined by a term over $A$. 
\end{proposition}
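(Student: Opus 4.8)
The plan is to reduce the two-point case to a single real variable and then exploit the fully absorptive behaviour of $\bot$ in $A$ together with an analyticity obstruction at the boundary of the probability simplex. For $n=2$ a probability mass function is determined by $p = P(a_1) \in [0,1]$ with $P(a_2) = 1-p$, so any term $t(x_1,x_2)$ defining $H^V_\bot(P)$ yields, after the term substitution $x_2 \mapsto 1 + (-x_1)$, a one-variable term $\tilde t(p)$ over $A$ that must satisfy
$$\tilde t(p) = H(p) = -p\log_2 p - (1-p)\log_2(1-p)$$
for all $p \in [0,1]$, where the two boundary mass functions give the finite non-peripheral value $H(0) = H(1) = 0$. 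I would derive a contradiction from the existence of such a $\tilde t$.

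The key step is a structural lemma, proved by induction on the build-up of a term $s$ over $A$: if $s(0) \neq \bot$, then there is a one-sided neighbourhood $[0,\delta)$ on which $s$ is $\bot$-free and real-analytic, and in particular differentiable with finite right-derivative at $0$. The induction uses absorptiveness at each node. For $s = s_1 + s_2$, $s = -s_1$, or $s = s_1 \cdot s_2$, the hypothesis $s(0)\neq\bot$ forces every immediate subterm to be non-$\bot$ at $0$, and analyticity is preserved. For $s = s_1/s_2$ it additionally forces $s_2(0)\neq 0$, so by continuity $s_2$ is nonzero on a smaller neighbourhood and the quotient is analytic there. For $s = \log_2(s_1)$ it forces $s_1(0) > 0$, since $\log_2 x = \bot$ for $x \le 0$, whence $s_1 > 0$ near $0$ and $\log_2(s_1)$ is analytic. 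Since a term has only finitely many subterms, one intersects the finitely many neighbourhoods produced.

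Applying the lemma to $\tilde t$: because $\tilde t(0) = H(0) = 0 \neq \bot$, the term $\tilde t$ has a finite right-derivative at $0$. But $\tilde t = H$ on $(0,\delta)$, and a direct computation gives
$$\lim_{p\to 0^+}\frac{H(p)}{p} = \lim_{p\to 0^+}\Big(-\log_2 p - \frac{(1-p)\log_2(1-p)}{p}\Big) = +\infty,$$
so $H$ has an infinite right-derivative at $0$ (a vertical tangent). This contradicts the finiteness forced by the lemma, so no such $\tilde t$, and hence no such $t$, can exist.

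The main obstacle is the structural lemma: the whole argument turns on showing that $\tilde t(0)\neq\bot$ genuinely forces analyticity near $0$, which it does precisely because $A$ contains no operator able to recover a finite value from a $\bot$-valued subterm — unlike the conditional of Section \ref{two_conditional_operators}, which by design breaks the absorptiveness of $\bot$. I expect the only additional care needed is to confirm that the one-variable reduction via $x_2 \mapsto 1+(-x_1)$ is faithful and that $H^V_\bot(P)$ really takes the finite Shannon value $0$ at the two boundary mass functions, so that $\tilde t(0)=\tilde t(1)=0\neq\bot$ is the correct non-peripheral target that the absorptive $\log_2 0 = \bot$ cannot meet.
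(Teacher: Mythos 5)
Your proposal is correct and follows essentially the same route as the paper: reduce to the single variable $p$, use the absorptiveness of $\bot$ to argue that any term of $A$ that is non-$\bot$ at $p=0$ must be differentiable there (your structural induction makes precise what the paper asserts directly about compositions of $+$, $\cdot$, $/$, $\log_2$ on their open domains), and contradict this with the infinite right-derivative of the binary entropy at $0$. The only cosmetic difference is that the paper first subtracts the summand $(1-p)\leftmult \log_2(1-p)$, which is well-behaved near $p=0$, and computes $\lim_{h\downarrow 0}\log_2 h$ directly, whereas you compute $\lim_{p\to 0^+}H(p)/p$; both exhibit the same obstruction.
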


\begin{proof} 
Let the sample space $V= \{0,1\}$. Then 
$$H^V_\bot(P) = P(0) \leftmult \log_2 P(0) + (1-P(0)) \leftmult \log_2 (1-P(0)).$$
Introducing variable $p$, an expression for entropy for size $n=2$ amounts to the existence of 
an term $t$ over the signature $\Sigma$ of $A$ such that 
$$t(p) =  p \leftmult \log_2 p + (1-p) \leftmult \log_2 (1-p),$$
for all $p \in [0,1]$. Suppose such a term exists.

We now consider the situation in terms of conventional analysis, without the element $\bot$ and where division and logarithm are partial functions on real numbers. Then, viewing $t$ as a partial function on $\real$ one notices that it is made up of a composition of functions, all of which are differentiable on all arguments in their respective domains. In particular, both for division and $\log_2$ we notice that the domain is an open subset of $\real$, not including $0$, and within the respective domains both functions are continuous and derivable. 

It follows that $t(-)$ understood as a partial function on $\real$ is differentiable whenever defined. In particular $t(p)$, which takes value $0$ for $p=0$ by definition of entropy, must be differentiable for $p=0$. The function 
$h(-) = (1-p) \leftmult \log_2 (1-p)$ is differentiable for $p=1$ so it follows that $t(-) - h(-)= p \leftmult \log_2 p$ is differentiable in $0$.

Thus, it follows that 
$$\lim_{h\downarrow 0} \frac{(p+h) \leftmult \log_2 (p +h) - p \leftmult \log_2 p}{h}$$
must exist within $\real$ for $p= 0$ so that 
$$\lim_{h\downarrow 0} \frac{h \leftmult \log_2 h - 0 \leftmult \log_2 0}{h} = 
\lim_{h\downarrow 0} \frac{h \cdot \log_2 h - 0 }{h} = \lim_{h\downarrow 0} \frac{h \cdot \log_2 h }{h} =
\lim_{h\downarrow 0} \log_2 h  $$ 
exists within $\real$, which is not the case, thereby completing the proof.
\end{proof}


\section{Expressions for entropy and cross-entropy}\label{general_approach}

In Section \ref{conditionals}, we have seen that over $\real_{\bot,\log_2,\,  \leftmult}$ terms exist that define entropy $H^V_\bot(P)$  of $P$,  and cross entropy $H^V_\bot(P,Q)$ of  
$P$ and $Q$, for probability mass functions $P$ and $Q$ over a finite sample space $V$, given a finite enumeration 
without repetition of the elements of $V$, represented by closed terms $t_1,\dots ,t_n $.

To pursue our term definability problem further with other (more familiar) mathematical operators, we set up a general format.


\subsection{Expansion problem for $\real_{\bot,\log_2}$}

\begin{definition}
Let $A =\real_{\bot,\log_2,f_1,..,f_k}$ be an expansion of $\real_{\bot,\log_2}$ with total functions $f_1,\ldots,f_k$ and having signature $\Sigma$. We say that $A$ allows term expressions for entropy for sample spaces of size $n$ if, given a sequence of new constants $c_1, \ldots, c_n$  there is a closed expression $\phi^n_H(\alpha)$, over the signature 
$\Sigma \cup \{c_1,\ldots,c_n\}$, with $\alpha$ a function variable, such that for every subset $V$ of $\real$, and for every interpretation of the new constants that constitutes an enumeration without repetitions of $V$, and for every probability mass function $P$ on $V$, 
$$H^V_\bot(P) = [P/\alpha]\phi^n_H(\alpha).$$ 
\end{definition}

\begin{definition} 
Let $A =\real_{\bot,\log_2,f_1,..,f_k}$ be an expansion of $\real_{\bot,\log_2}$ with total functions $f_1, \ldots, f_k$ and having signature $\Sigma$.  We say that $A$ allows expressions  for cross-entropy for sample spaces of size $n$ if there is an expression $\phi_H(\alpha,\beta)$ over $\Sigma \cup \{c_1,\ldots, c_n\}$  with $\alpha$ and $\beta$ a function variables, such that for every subset $V$ of $\real$, and for every interpretation of the new constants  that constitute an enumeration without repetitions of $V$, and for every pair of probability mass functions $P$,$Q$ on $V$, 
$$H^V_\bot(P,Q) = [P/\alpha,Q/\beta]\phi^n_H(P,Q).$$
\end{definition}

For sample space size $n=1$ the situation trivialises and not even $\log_2$ is needed to express entropy as well as cross-entropy.
However, these questions are left unanswered:

\begin{problem}
Suppose that $A =\real_{\bot,\log_2,f_1,..,f_k}$ allows expressions for entropy on sample spaces of size $n>1$, must $A$ also allow expressions for entropy on sample spaces of size $n+1$?
\end{problem}

\begin{problem} 
Suppose that $A =\real_{\bot,\log_2,f_1,..,f_k}$ allows expressions for cross-entropy on sample spaces of size $n>1$, must $A$ also allow expressions for cross-entropy on sample spaces of size $n+1$?
\end{problem}

As $H^V_\bot(P) = H^V_\bot(P,P)$, if $A$ allows expression of cross entropy on sample spaces of size $n$, then $A$ will also allow expression of entropy on sample spaces of size $n$.


\subsection{Composite operators}

\begin{proposition} 
Let $A =\real_{ \bot,\log_2,f}$, with $f$ defined by 
$$f(x) =  x \leftmult \log_2 x,$$
then $A$ allows term expressions for entropy for all finite sizes of sample spaces $V$. 
\end{proposition}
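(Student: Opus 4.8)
The plan is to exploit the fact that the new operation $f$ is, up to sign, precisely the generic summand appearing in the entropy formula, so that entropy collapses into a single negated finitary generalized sum of $f$ applied to the probability values. This is exactly the device that converts the term expression established over $\real_{\bot,\log_2,\leftmult}$ into one that lives in the smaller signature of $A$.

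First I would recall the term expression for entropy established in Section \ref{conditionals}, namely
$$H^V_\bot(P) = -\sum_{x \in V} P(x) \leftmult \log_2 P(x).$$
Since $f(x) = x \leftmult \log_2 x$ by definition, each summand $P(x) \leftmult \log_2 P(x)$ is exactly $f(P(x))$, and therefore
$$H^V_\bot(P) = -\sum_{x \in V} f(P(x)).$$
Crucially, this identity holds for every value of $P(x)$, including $P(x)=0$: the left-sequential multiplication returns $0$ whenever its left argument is $0$, so $f(0) = 0 \leftmult \log_2 0 = 0$, which is precisely the behaviour the entropy convention requires at a zero-probability point.

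Next I would package this as a closed term. Given the new constants $c_1, \ldots, c_n$ and the function variable $\alpha$, I would set
$$\phi^n_H(\alpha) = -\big(f(\alpha(c_1)) + f(\alpha(c_2)) + \cdots + f(\alpha(c_n))\big),$$
a term over $\Sigma \cup \{c_1, \ldots, c_n\}$ using only $+$, $-$ and $f$. For any subset $V$ of $\real$, any interpretation of $c_1, \ldots, c_n$ as an enumeration without repetition $a_1, \ldots, a_n$ of $V$, and any probability mass function $P$ on $V$, the substitution $[P/\alpha]$ yields $-\sum_{i=1}^{n} f(P(a_i)) = -\sum_{x \in V} f(P(x)) = H^V_\bot(P)$, as required. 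The finitary generalized sum is well-defined and independent of the chosen enumeration because $+$ is commutative and associative in the common meadow.

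There is essentially no obstacle here: the content of the proposition lies entirely in the choice of $f$, and once that composite is taken as a primitive operation the term expression is immediate. The only points needing care are formal, namely verifying the $P(x)=0$ case and the enumeration-independence of the sum. It is worth contrasting this with Proposition \ref{basic_incompleteness}: there the differentiability of every definable partial function blocked a term for entropy even at $n=2$, whereas here we have sidestepped that obstruction precisely by promoting the problematic non-differentiable composite $x \leftmult \log_2 x$ to a basic operation.
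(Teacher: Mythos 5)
Your proposal is correct and follows essentially the same route as the paper: both take $\phi^n_H(\alpha) = -\sum_{i=1}^{n} f(\alpha(c_i))$ and observe that each summand $f(P(c_i)) = P(c_i) \leftmult \log_2 P(c_i)$ reproduces the term expression for $H^V_\bot(P)$ already established over $\real_{\bot,\log_2,\leftmult}$. Your explicit check that $f(0) = 0 \leftmult \log_2 0 = 0$ is a welcome addition that the paper leaves implicit.
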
 

\begin{proof} 
Let $n$ be a size and let $c_1,\ldots,c_n$ a corresponding sequence of new constants, then take 
$\phi^n_H(\alpha)= -\sum_{i=1}^{n}f(\alpha(c_i))$. For a subset $V= \{c_1,\ldots,c_n\}$ of size $n$ of $\real$ and a 
probability mass function $P$ on $V$ we find:
$$\phi^n_H(P)= [P/\alpha](-\sum_{i=1}^{n}f(\alpha(c_i)))= -\sum_{i=1}^{n}f(P(c_i))=$$
$$-\sum_{i=1}^{n}P(c_i) \leftmult \log_2 P(c_i) = H^V_\bot(P)$$
\end{proof}

\begin{proposition} 
Let $A =\real_{ \bot,\log_2,f}$, with $f$ defined by
$$f(x) = x \leftmult x,$$
then $A$ fails to allow a term expression for cross-entropy for sample space $V$ of size $n=2$. 
\end{proposition}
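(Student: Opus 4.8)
The plan is to reduce the undefinability of cross-entropy to the undefinability of entropy already exploited in Proposition \ref{basic_incompleteness}, using the identity $H^V_\bot(P) = H^V_\bot(P,P)$ together with the observation that the newly added operator $f$ is differentiable. The whole point of this negative result is that $f(x)=x \leftmult x$ is a ``harmless'' smooth operation, so it cannot defeat the analytic obstruction.

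First I would record what $f$ actually computes. Unfolding the definition, $f(x) = x \leftmult x = (x\cdot x)\lhd x \rhd 0$ equals $0$ when $x=0$, equals $x\cdot x$ when $x\notin\{0,\bot\}$, and equals $\bot$ when $x=\bot$. Since $0\cdot 0 = 0$, on the ordinary reals $f$ is simply the squaring map $x\mapsto x^2$, and it returns $\bot$ only on input $\bot$. In particular, regarded as a partial function on $\real$ (forgetting $\bot$), $f$ is everywhere defined and infinitely differentiable. Hence enlarging $\real_{\bot,\log_2}$ by $f$ adds no operation that disturbs the differentiability argument.

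Next comes the reduction. Fix $V=\{c_1,c_2\}$ and write $p=P(c_1)$, $q=Q(c_1)$, so that $H^V_\bot(P,Q) = -\big(p \leftmult \log_2 q + (1-p)\leftmult \log_2(1-q)\big)$. Suppose, for contradiction, that some expression $\phi^2_H(\alpha,\beta)$ over the signature $\Sigma$ of $A$ defines cross-entropy for $n=2$. Substituting $\beta:=\alpha$ produces a term $\psi(\alpha) = [\alpha/\beta]\phi^2_H(\alpha,\beta)$, still over $\Sigma$, and for every probability mass function $P$ on $V$ we obtain $[P/\alpha]\psi = H^V_\bot(P,P) = H^V_\bot(P)$. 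Thus $\psi$ would define entropy for sample spaces of size $2$ over $A$.

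Finally I would rule $\psi$ out by the differentiability argument of Proposition \ref{basic_incompleteness}, which now applies essentially verbatim because every basic operation of $A$ --- including $f$, by the first step --- is differentiable on the open interior of its domain. Viewing $\psi$ as a partial function of the single real variable $p$ (with $\alpha(c_1)=p$ and $\alpha(c_2)=1-p$), it is a composition of such operations, hence differentiable wherever it takes an ordinary value; since $\psi(0)=0$ is ordinary, all $\log_2$-arguments are positive and all divisors nonzero at $p=0$, so by continuity $\psi$ is defined and smooth on a right-neighbourhood $[0,\varepsilon)$ and in particular has a finite right-derivative at $0$. But $\psi(p)$ must equal $-\big(p\log_2 p + (1-p)\log_2(1-p)\big)$ for $p\in(0,1]$, whose right-derivative at $0$ diverges to $+\infty$. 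This contradiction shows that no such $\phi^2_H$ exists, so $A$ fails to allow a term expression for cross-entropy when $n=2$. The only genuine step is the first one: confirming that the conditional hidden inside $\leftmult$ collapses so that $f$ coincides with the smooth map $x\mapsto x^2$; once this is seen, the result is an immediate corollary of the entropy obstruction via the diagonal substitution $\beta:=\alpha$, requiring no new analytic input.
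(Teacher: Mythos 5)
Your proof is correct, but it takes a genuinely different route from the paper's. The paper argues about cross-entropy directly: it fixes the degenerate second argument $Q(0)=0$, $Q(1)=1$, so that $H^V_\bot(P,Q)$ equals $0$ at $p=0$ but $\bot$ for every $p\in(0,1]$; since each operation of $A$ (including $f$, which is just $x\mapsto x^2$ on ordinary reals) is continuous with open domain when viewed as a partial function on $\real$, any defining term would have to be non-$\bot$ on a whole neighbourhood of $p=0$, a contradiction. That argument needs only continuity and openness of domains, and it exploits the second argument $Q$ in an essential way. You instead perform the diagonal reduction $\beta:=\alpha$, using $H^V_\bot(P)=H^V_\bot(P,P)$, and then rerun the differentiability obstruction of Proposition \ref{basic_incompleteness} in the enlarged signature, after the (correct and necessary) check that $x\leftmult x$ collapses to the smooth total map $x\mapsto x^2$ and so cannot disturb that argument. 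Your route is slightly less elementary --- it needs differentiability rather than mere continuity --- but it proves strictly more: it shows that even entropy for $n=2$ is not term-definable over $\real_{\bot,\log_2,f}$ with $f(x)=x\leftmult x$, whereas the paper's argument establishes only the stated non-definability of cross-entropy. Both proofs hinge on the same initial observation, namely that adjoining $f$ adds nothing capable of defeating the analytic obstruction.
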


\begin{proof} 
We consider the operations of $A$ as partial functions on $\real$. Then notice that all the operations of $A$ are continuous when defined. Thus, every term over the signature $\Sigma$ of $A$  determines a function which is continuous when defined. Now consider $V= \{0,1\}$ and let $P$ and $Q$ be probability mass functions on $V$ as follows: 
$$P(0) = p, P(1) = 1-p, Q(0) = 0, Q(1) = 1.$$
We find 
$$H^V_\bot(P,Q) = p \leftmult 0 + (1-p) \leftmult 0 = p\leftmult 0.$$ 
Let $t(p)$ be an expression for $H^V_\bot(P,Q)$ as a function of $p$. Then it must be the case that $t(0) = 0$ while $t(p) = \bot$ for $t\in (0,1]$ which is in contradiction with the continuity of $t(p)$ in $0$.
\end{proof}


\subsection{Sign function}\label{adding_sign}

We expand $\real_{ \bot,\log_2}$ with a sign function that simulates an ordering \cite{BergstraT2024TM}.

\begin{definition}
A {\em sign operator}  $\sg$ represents an ordering of the meadow $\real_\bot$ of reals, and is defined by:

if $x = 0$ then $ \sg(x) = 0$; 

if $x > 0$ then $ \sg(x) = 1$ 

if $x < 0$ then $ \sg(x) = -1$

if $x = \bot$ then $\sg(x) = \bot$
\end{definition}

\begin{proposition} 
Let $A =\real_{ \bot,\log_2,f}$, with $f$ defined by
$$f(x) = \sg^2(x).$$ 
Then $A$ allows term expressions to define entropy for sample spaces of all finite sizes.
\end{proposition}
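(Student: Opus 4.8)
The plan is to reduce the problem to re-expressing the single building block $x \leftmult \log_2 x$, since by the proposition of Section~\ref{conditionals} this is all that entropy requires, as an honest term over $A = \real_{\bot,\log_2,f}$ with $f(x) = \sg^2(x)$. First I would record the behaviour of $f$: on ordinary reals $f(x) = 0$ exactly when $x = 0$ and $f(x) = 1$ otherwise, while $f(\bot) = \bot$. The naive candidate $x \cdot \log_2 x$ fails precisely at $x = 0$, where $\log_2 0 = \bot$ forces $0 \cdot \log_2 0 = \bot$ by absorption, rather than the value $0$ demanded by entropy. The idea is to use $f$ to nudge the argument of the logarithm away from $0$ exactly when $x = 0$.

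Concretely, I would propose the term
$$g(x) \;=\; x \cdot \log_2\bigl(x + 1 - f(x)\bigr)$$
and claim that $g(x) = x \leftmult \log_2 x$ for all $x \geq 0$, which is the only range relevant for probability values. The verification is a two-case check. If $x = 0$ then $f(x) = 0$, so the argument of $\log_2$ equals $1$, giving $\log_2 1 = 0$ and $g(0) = 0 \cdot 0 = 0$, with no $\bot$ produced. If $x > 0$ then $f(x) = 1$, the argument collapses to $x$, and $g(x) = x \cdot \log_2 x$. In both cases $g(x)$ agrees with $x \leftmult \log_2 x$.

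With this identity in hand, assembling the entropy expression is routine. For sample space size $n$ with fresh constants $c_1,\ldots,c_n$ I would set
$$\phi^n_H(\alpha) \;=\; -\sum_{i=1}^{n} \alpha(c_i) \cdot \log_2\bigl(\alpha(c_i) + 1 - f(\alpha(c_i))\bigr),$$
a closed expression over $\Sigma \cup \{c_1,\ldots,c_n\}$. Substituting a probability mass function $P$ on $V = \{c_1,\ldots,c_n\}$ and applying the identity term by term yields $-\sum_{i} P(c_i) \leftmult \log_2 P(c_i)$, which equals $H^V_\bot(P)$.

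The substance of the argument is not the calculation but the recognition that a single additive correction built from $f$ is enough, and the conceptual point deserves isolating. The obstruction of Proposition~\ref{basic_incompleteness} rested on every term of $\real_{\bot,\log_2}$ defining a function that is differentiable, indeed continuous, wherever defined, which the function $x \log_2 x$ extended by $0$ at the origin is not. Since $\sg^2$ is discontinuous at $0$, terms using $f$ escape this constraint, and the shift $1 - f(x)$ is exactly the discontinuous ingredient that converts the forbidden $0 \cdot \bot$ into the admissible $0 \cdot 0$. The only thing to watch is that the correction be applied at the right point and only there, which the case analysis confirms; I expect this bookkeeping, rather than any deep difficulty, to be where care is needed.
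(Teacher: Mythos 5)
Your argument is correct, but the key construction differs from the paper's. The paper also rewrites the argument of $\log_2$ using the indicator $\sg^2(\alpha(c_i))$, but where you shift the argument to the constant $1$ via the summand $1-f(x)$, the paper shifts it to an auxiliary term $t_n$, built by a recursion over the constants $c_1,\ldots,c_n$, which selects a sample point on which $\alpha$ is non-zero (such a point must exist since $P$ sums to $1$); its expression is $-\sum_i \alpha(c_i)\cdot\log_2\bigl(f(\alpha(c_i))\cdot\alpha(c_i)+(1-f(\alpha(c_i)))\cdot t_n\bigr)$. Your version is simpler and entirely local --- each summand depends only on $\alpha(c_i)$, with no recursive bookkeeping across the sample space --- and the two-case verification ($x=0$ gives $\log_2 1 = 0$ hence summand $0$; $x>0$ gives $x\cdot\log_2 x$) is exactly right; the shift also keeps $\bot$ absorptive since $f(\bot)=\bot$. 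Indeed the paper itself uses your additive-shift idea later, in Section~\ref{observations}, where it writes $x \cdot\frac{\log_2 (y^2 + 1 - \sg^2(x))}{2}$, so your route is very much in the paper's spirit. One small point of care: your identity $g(x)=x\leftmult\log_2 x$ is claimed for all $x\ge 0$, and for $x>1$ it is harmless, but strictly the shifted argument $x+1-f(x)$ equals $1$ only at $x=0$, which is the single point where the correction is needed; your case analysis covers this. Your approach also adapts directly to cross-entropy (shift the second argument by $1-\sg^2(\alpha(c_i))$), whereas the paper's $t_n$ device is reused there for the same purpose.
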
 

\begin{proof} 
Let $|V| = n$ and $c_1,\ldots,c_n$ a sequence of new constants that enumerates the elements of $V \subseteq \real$ without repetition. 
Now define terms $t_i$ for $i \in [1,n]$ as follows: 
$$t_1 =_{\mathsf{syn}} c_1, \ \ t_{i+1} =_{\mathsf{syn}} f(\alpha(c_{i+1})) \cdot t_{i+1} + (1-f(\alpha(c_{i+1}))) \cdot t_i .$$ 
The idea of this definition is that for all $\alpha$ taking non-$\bot$ values on all elements of $V$, $t_n$ takes the value of $t_i$ with $i$ maximal so that $\alpha(c_i) \neq 0$. 
Then take 
$$\phi^n_H(\alpha)= -\sum_{i=1}^{n}(\alpha(c_i) \cdot \log_2(f(\alpha(c_i)) \cdot \alpha(c_i) + (1-f(\alpha(c_i))) \cdot t_n).$$ 
Given a probability mass function $P$ on $V$ we find 
$$[P/\alpha](\phi^n_H(\alpha)) = -\sum_{i=1}^{n}(P(c_i) \cdot \log_2(\sg^2(P(c_i)) \cdot P(c_i) + (1-\sg^2(P(c_i))) \cdot t_n).$$ 
To see that this works, notice that if $P(c_i) \neq 0$, $\sg^2(P(c_i)) = 1$ so that 
$$\sg^2(P(c_i)) \cdot P(c_i) + (1-\sg^2(P(c_i))) \cdot t_n)= P(c_i)$$
which is non-zero, so that $\log_2 P(c_i) \neq \bot$; and  if $P(c_i) = 0$, $\sg^2(P(c_i)) = 1$ and 
$$\sg^2(P(c_i)) \cdot P(c_i) + (1-\sg^2(P(c_i))) \cdot t_n)= P(t_n)$$
which is guaranteed to be non-zero by the definition of the terms $t_i$, so that 
$$P(c_i) \cdot \log_2(f(P(c_i)) \cdot P(c_i) + (1-f(P(c_i))) \cdot t_n)= 0$$ 
as intended for the definition of entropy on a sample $s$ with $P(s) = 0.$
\end{proof} 
	
\begin{proposition} 
Let $A =\real_{ \bot,\log_2,f}$, with $f$ defined by 
$$f(x) = \sg^2(x).$$
Then $A$ allows term expressions to define cross-entropy ffor sample spaces of all finite sizes.
\end{proposition}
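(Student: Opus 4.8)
The plan is to mimic closely the proof of the preceding proposition for entropy, adapting the single-argument construction to the two-argument setting by letting the sign test depend on $P$ while the logarithm is fed the values of $Q$. The governing identity I aim to realise is the sequential form $H^V_\bot(P,Q) = -\sum_{x \in V} P(x) \leftmult \log_2 Q(x)$, together with properties (i)--(ii) of the sequential expected value: the result is $\bot$ exactly when $Q(x)=0$ for some $x$ with $P(x)\neq 0$, and is an ordinary real otherwise.

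First I would reuse verbatim the auxiliary terms $t_1,\ldots,t_n$ of the entropy proof, built from the function variable $\alpha$ that plays the role of $P$. With $f=\sg^2$ these satisfy $t_{i+1}=\alpha(c_{i+1})$ whenever $\alpha(c_{i+1})\neq 0$ and $t_{i+1}=t_i$ otherwise, so that after substituting a probability mass function $P$ for $\alpha$ the closed term $t_n$ evaluates to $P(c_i)$ for the largest index $i$ with $P(c_i)\neq 0$. Since $P$ sums to $1$ such an index exists, and therefore $t_n$ evaluates to a strictly positive real; in particular a non-$\bot$ value that is a legitimate argument of $\log_2$.

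Next I would take, with $\alpha,\beta$ standing for $P,Q$, the expression
$$\phi^n_H(\alpha,\beta) = -\sum_{i=1}^{n} \alpha(c_i)\cdot \log_2\bigl(\sg^2(\alpha(c_i))\cdot\beta(c_i) + (1-\sg^2(\alpha(c_i)))\cdot t_n\bigr),$$
and verify it summand by summand. If $P(c_i)=0$ then $\sg^2(P(c_i))=0$, so the argument of $\log_2$ collapses to $t_n>0$; hence $\log_2$ returns an ordinary real and the outer factor $P(c_i)=0$ forces the summand to be $0$, crucially \emph{without} ever evaluating $\log_2 Q(c_i)$, which might be $\bot$. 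If $P(c_i)\neq 0$ and $Q(c_i)\neq 0$ then $\sg^2(P(c_i))=1$ and the argument of $\log_2$ is $Q(c_i)>0$, yielding the intended $P(c_i)\cdot\log_2 Q(c_i)$. Finally, if $P(c_i)\neq 0$ but $Q(c_i)=0$, then $\sg^2(P(c_i))=1$, so the argument of $\log_2$ is $Q(c_i)=0$, whence $\log_2 Q(c_i)=\bot$ and, by absorption, the summand is $\bot$; this correctly produces $H^V_\bot(P,Q)=\bot$, matching property (i). Comparison with the sequential form then closes the argument for every finite $n$.

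The delicate point, and the only genuine difference from the entropy proof, is the \emph{asymmetry} between the two arguments: the sanitising switch to $t_n$ must be governed by $\sg^2(P(c_i))$ and never by $Q(c_i)$. Sanitising on $P$ guarantees that a vanishing $P(c_i)$ suppresses the summand before $\log_2$ can see a bad argument, while leaving a vanishing $Q(c_i)$ with $P(c_i)\neq 0$ untouched so that the $\bot$ it produces genuinely propagates. I expect the main obstacle to be presentational rather than mathematical: one must confirm that in the $P(c_i)=0$ branch the stray product $0\cdot Q(c_i)$, and in the $P(c_i)\neq 0$ branch the stray product $0\cdot t_n$, do not smuggle in $\bot$. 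This holds precisely because $Q(c_i)$ and $t_n$ are ordinary reals, so multiplication by $0$ returns $0$ rather than being absorbed.
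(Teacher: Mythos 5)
Your proposal is correct and takes essentially the same approach as the paper: the paper's proof likewise reuses the entropy construction verbatim and merely replaces the inner argument of $\log_2$ by $\beta(c_i)$ while keeping the sanitising sign test on $\alpha(c_i)$. Your summand-by-summand verification (including the correct reading $\sg^2(0)=0$, where the paper's entropy proof contains a typo) only supplies detail the paper leaves implicit.
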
 

\begin{proof} 
The proof follows the same line as above with the following modification:
$$\phi^n_H(\alpha,\beta)= -\sum_{i=1}^{n}(\alpha(c_i) \cdot \log_2(f(\alpha(c_i)) \cdot \beta(c_i) + (1-f(\alpha(c_i))) \cdot t_n)$$
This expression for cross-entropy takes care of peripheral outcomes arising in case $Q(c_i) = 0$ while $P(c_i) \neq 0$. 
\end{proof}

Her are some observations about $\sg^2(x)$:

\begin{lemma}
Whilst $\sg^2(x) = x \leftmult \frac{1}{x}$, the function $f(x, y) = x \leftmult y$ has no term definition over
$A =\real_{\bot,\log_2,\sg^2}$. 
\end{lemma}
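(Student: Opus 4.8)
The statement has two parts, which I would treat separately. The identity $\sg^2(x) = x \leftmult \frac{1}{x}$ is a routine case check from the definition $u \leftmult v = (u\cdot v)\lhd u \rhd 0$: when $x=0$ the conditional returns its third branch, which is $0 = \sg^2(0)$; when $x\neq 0,\bot$ it returns $x\cdot\frac{1}{x} = 1 = \sg^2(x)$; and when $x=\bot$ both sides are $\bot$. (Here one should note that the first branch $0\cdot\frac{1}{0}$ is never evaluated when $x=0$, since the conditional discards it once its test is $0$, so no spurious $\bot$ intrudes.) The interesting part is the non-definability of the binary operator, and the whole point of the contrast is that although $\sg^2$ is a special case of $\leftmult$, the conditional hidden inside $\leftmult$ is genuinely stronger: unlike that conditional, $\sg^2$ does not break the absorptiveness of $\bot$.

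The plan is to exploit exactly this. The key observation is that every operation in the signature $\Sigma$ of $A$ is $\bot$-absorptive in each of its argument positions: this holds for $+,-,\cdot,\frac{\cdot}{\cdot}$ by the common-meadow axioms, for $\log_2$ because $\log_2(\bot)=\bot$, and for $\sg^2$ because $\sg(\bot)=\bot$ forces $\sg^2(\bot)=\bot$. From this I would prove, by a one-line structural induction along the path from a leaf to the root, an \emph{upward absorption} principle: if any leaf of a term evaluates to $\bot$, then the whole term evaluates to $\bot$. Consequently, for any $\Sigma$-term $t(x,y)$ that contains at least one occurrence of the variable $y$, substituting $y=\bot$ forces $t(x,\bot)=\bot$ for \emph{every} value of $x$.

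With this in hand the argument closes by a dichotomy. Suppose, for contradiction, that some $\Sigma$-term $t(x,y)$ defines $f(x,y)=x\leftmult y$ on all of $(\real\cup\{\bot\})^2$. If $t$ contains at least one occurrence of $y$, then by upward absorption $t(0,\bot)=\bot$, whereas $f(0,\bot)=0\leftmult\bot = 0$, a contradiction. If instead $t$ contains no occurrence of $y$, then $t$ is independent of its second argument, whereas $f(1,1)=1$ and $f(1,2)=2$ show that $f$ does depend on $y$; again a contradiction. Hence no such $t$ exists.

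The step I expect to be the genuine crux is recognising that $\sg^2$ cannot be used to recover the non-absorptive behaviour of $\leftmult$. The tempting route is to mimic the conditional with a guard such as $\sg^2(x)$, switching the second argument off when $x=0$; this is precisely how one realises $x\leftmult\bot$ as the $y$-free term $\log_2(1-\sg^2(x))$, which is defined (with value $0$) only at $x=0$. The point to make precise is that any such device still passes a genuine occurrence of $y$ upward through absorptive operations, so the only way to keep $t(0,\bot)$ finite is to eliminate $y$ altogether, which destroys the dependence on $y$ that $\leftmult$ requires off the line $x=0$. Isolating this tension as the clean upward-absorption lemma, and in particular verifying that $\sg^2$ and $\log_2$ really are $\bot$-absorptive, is where the care lies; the remainder is immediate.
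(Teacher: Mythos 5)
Your proof is correct and takes essentially the same approach as the paper's, whose entire argument is the observation that $0 \leftmult \bot = 0$ while every operation of $A$ (including $\sg^2$ and $\log_2$) is $\bot$-absorptive, so any $\Sigma$-term propagates $\bot$. You additionally verify the identity $\sg^2(x) = x \leftmult \frac{1}{x}$ explicitly and handle the degenerate case of a candidate term not containing $y$, both of which the paper leaves implicit.
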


\begin{proof}
To see the latter, notice that $0 \leftmult \bot = 0$ so $\leftmult$ is not absorptive. However, for all operations of $A$ including $\sg^2$, $\bot$ is absorptive, i.e., $\bot$  is propagated. 
\end{proof}

\begin{lemma}
The conditional operator $x \lhd y \rhd z$ can be defined in terms of $\leftmult$ by
$$x \lhd y \rhd z= \sg^2(y) \leftmult x+ (1 - \sg^2(y)) \leftmult z.$$
Conversely, $\leftmult$ is defined in terms of the conditional operator:  $x \leftmult y = (x \cdot y) \lhd x \rhd 0$.
\end{lemma}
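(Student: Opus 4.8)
The plan is to prove the first identity by a case analysis on the value of $y$, exploiting the three-way split that is common to both the conditional and the sign operator, and then to observe that the converse is nothing but the defining equation of $\leftmult$ from Section~\ref{two_conditional_operators}, so no separate argument is needed there.

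First I would record the behaviour of $\leftmult$ on the three critical left arguments $0$, $1$, and $\bot$: unfolding $a \leftmult b = (a \cdot b) \lhd a \rhd 0$ gives $0 \leftmult b = 0$ for every $b$ (including $b = \bot$), $1 \leftmult b = b$, and $\bot \leftmult b = \bot$. I would also record that $\sg^2$ collapses $y$ into exactly these three values: $\sg^2(y) = 0$ when $y = 0$, $\sg^2(y) = 1$ when $y \neq 0$ and $y \neq \bot$, and $\sg^2(\bot) = \bot$ (since $\sg(\bot) = \bot$ and $\bot \cdot \bot = \bot$). Correspondingly $1 - \sg^2(y)$ takes the complementary values $1$, $0$, and $\bot$.

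Then I would verify each of the three cases against the definition of the conditional. If $y = 0$, the right-hand side becomes $0 \leftmult x + 1 \leftmult z = 0 + z = z$, matching $x \lhd 0 \rhd z = z$. If $y \neq 0$ and $y \neq \bot$, it becomes $1 \leftmult x + 0 \leftmult z = x + 0 = x$, matching $x \lhd y \rhd z = x$. If $y = \bot$, both coefficients are $\bot$, so the right-hand side is $\bot \leftmult x + \bot \leftmult z = \bot + \bot = \bot$, matching $x \lhd \bot \rhd z = \bot$. In each case the equality holds for all $x, z \in \real \cup \{\bot\}$.

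The one point requiring care — and the one I would flag as the crux — is that $0 \leftmult x = 0$ must hold even when $x = \bot$: it is precisely the failure of $\leftmult$ to be absorptive in its left argument (noted after its definition, and again in the preceding lemma) that silences the unwanted branch instead of contaminating the output with $\bot$. Had ordinary multiplication been used in the defining expression, the case $y = 0$ with $x = \bot$ would give $0 \cdot \bot + 1 \cdot z = \bot$ rather than $z$, and the identity would fail. Dually, the case $y = \bot$ works only because $\sg^2$ propagates $\bot$ into \emph{both} coefficients, whereupon $\leftmult$ forces each summand to $\bot$; this is what reconciles the otherwise absorptive right-hand side with the prescribed value $x \lhd \bot \rhd z = \bot$.
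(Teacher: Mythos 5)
Your proof is correct. The paper states this lemma without any proof at all, so there is nothing to compare against; your three-way case analysis on $y$ (using $\sg^2(y) \in \{0,1,\bot\}$, the identities $0 \leftmult x = 0$, $1 \leftmult x = x$, $\bot \leftmult x = \bot$, and the observation that the converse is just the defining equation of $\leftmult$) is exactly the argument the authors evidently had in mind, and your flagged crux --- that $0 \leftmult \bot = 0$, i.e.\ the non-absorptiveness of $\leftmult$, is what makes the discarded branch harmless --- is the right point to emphasise.
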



\section{Expansions expressive for cross-entropy }\label{observations}

By a cross-entropy expressive expansion of 
$A =\real_{ \bot,\log_2}$ we mean an expansion for which term definitions can be given for cross-entropy for all finite sample sizes. Clearly, such algebras exist as adding  $\sg^2$  to  $A =\real_{ \bot,\log_2}$ suffices. There are others.

Consider the following three observations: 

(i) A probability mass function will not take any negative value on any element of the sample space. 

(ii) A probability mass function will not take value $\bot$ on any element of the sample space. 

(iii) For $y \geq 0$,  
$$ \log_2 y = \frac{\log_2 y^2} {2}.$$

\begin{proposition}\label{Prop-on-f}
Let 
$$f(x,y) = x \leftmult (\frac{\log_2 y^2)}{2}) + 0 \cdot y.$$
Then $A =\real_{ \bot,\log_2, f}$ allows term definitions of entropy, and of cross-entropy, for sample space of all finite sizes.
\end{proposition}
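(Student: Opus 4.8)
The plan is to define the required terms directly as sums of the single new operation $f$, in exact parallel with the $\leftmult$-based expressions already available in Section~\ref{conditionals}, and then to verify that on the restricted range of arguments supplied by probability mass functions $f$ reproduces $x \leftmult \log_2 y$. Concretely, for sample size $n$ with new constants $c_1,\ldots,c_n$ I would set
$$\phi^n_H(\alpha) = -\sum_{i=1}^{n} f(\alpha(c_i),\alpha(c_i)), \qquad \phi^n_H(\alpha,\beta) = -\sum_{i=1}^{n} f(\alpha(c_i),\beta(c_i)),$$
so that entropy is recovered as cross-entropy along the diagonal. Both are closed terms over $\Sigma \cup \{c_1,\ldots,c_n\}$, and independence of the chosen enumeration follows from commutativity and associativity of $+$, exactly as for the finitary generalized sum.

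The heart of the argument is the pointwise identity $f(p,q) = p \leftmult \log_2 q$, valid for all reals $p,q \geq 0$. To prove it I would first invoke observation (iii), together with the fact that $q^2 \geq 0$ for every real $q$, to rewrite $\frac{\log_2 q^2}{2} = \log_2 q$; the square is precisely the device that keeps the argument of the logarithm non-negative without any sign test, so that $\log_2 q^2$ is an ordinary real whenever $q \neq 0$, and the identity also survives the degenerate case $q = 0$, where both sides collapse to $\log_2 0 = \bot$. Next I would note that observations (i) and (ii) guarantee that $q$ is an ordinary non-negative real, so the trailing summand satisfies $0 \cdot q = 0$; since adding $0$ is neutral even on the value $\bot$ (by absorptivity, $\bot + 0 = \bot$), the term $+\,0\cdot q$ contributes nothing on probability arguments. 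Combining the two reductions yields $f(p,q) = p \leftmult \log_2 q$ as claimed, and specialising to $q = p$ gives $f(p,p) = p \leftmult \log_2 p$ for the entropy case.

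With the identity in hand the conclusion is immediate. Substituting a probability mass function $P$, respectively a pair $P,Q$, and applying the identity termwise gives
$$[P/\alpha]\phi^n_H(\alpha) = -\sum_{i=1}^{n} P(c_i) \leftmult \log_2 P(c_i), \qquad [P/\alpha,Q/\beta]\phi^n_H(\alpha,\beta) = -\sum_{i=1}^{n} P(c_i) \leftmult \log_2 Q(c_i),$$
which are exactly the $\leftmult$-terms shown to equal $H^V_\bot(P)$ and $H^V_\bot(P,Q)$ in the proposition of Section~\ref{conditionals}. Since $n$ was arbitrary, this covers sample spaces of all finite sizes.

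The step I expect to require the most care is the verification of the pointwise identity at the boundary value $q = 0$, where one must track the interplay of three peripheral phenomena simultaneously: that $\log_2 0^2 = \log_2 0 = \bot$, so the squaring does not accidentally regularise the singularity; that $p \leftmult \bot$ correctly yields $0$ when $p = 0$ and $\bot$ when $p \neq 0$, so that cross-entropy returns $\bot$ exactly in the situation of Example~\ref{example2}; and that the neutral summand $0 \cdot q = 0$ does not overwrite that $\bot$. Everything else is routine substitution and appeal to the already-established $\leftmult$ formulae.
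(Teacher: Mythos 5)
Your proof is correct and follows essentially the same route as the paper: it uses observations (i)--(iii) to show that on the non-negative, non-$\bot$ arguments supplied by probability mass functions, $f(p,q)$ coincides with $p \leftmult \log_2 q$ (including the boundary case $q=0$, where $0\leftmult\bot=0$ and $p\leftmult\bot=\bot$ for $p\neq 0$), and then substitutes into the $\leftmult$-based terms already established in Section~\ref{conditionals}. The paper's own proof is just a terser version of this argument; your write-up supplies the pointwise verification it leaves implicit.
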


\begin{proof}
We notice first that in the definition of $f(x,y)$ the summand $0 \cdot y$ is included in order to have this function $\bot$-preserving in both arguments. The proof of this proposition is immediate as in the defining expressions of the kind 
$$H(P,Q) = -\sum_{x \in S}(P(x) \cdot \log_2  Q(x))$$
for entropy and cross-entropy, in occurrences of
$t \cdot \log_2 r$, $r$ is  non-$\bot$ (observation (ii)) and non-negative 
(observation (i)) so that $t \cdot \log_2 r$ may be replaced by 
$t \cdot \frac{\log_2 r^2}{2}+ 0 \cdot r$. 
\end{proof}
The function $f(x,y)$  can be expressed with the help of $\sg^2$ and $\log_2$ as follows:
$$ f(x,y) = x \cdot\frac{\log_2 (y^2 + 1 - \sg^2(x))}{2}.$$

\begin{problem} 
\label{P1}
With $f$ as above, does there exist an term definition of $\sg^2(x)$
in $A =\real_{ \bot,\log_2,f}$?
\end{problem} 

Whether or not an expansion of $A =\real_{ \bot,\log_2}$ has is expressive for cross-entropy  may or may not boil down to the possibility to express some finite collection of $\bot$-preserving  functions with one or more variables. Perhaps $\{\sg^2\}$ provides such a finite collection? 
If so, Problem~\ref{P1} has a positive solution, of not perhaps $\{f\}$ is such a finite collection, an option which leads to the following question:
\begin{problem} 
Suppose that $A =\real_{\bot,\log_2,f_1,..,f_k}$ allows term definitions for cross-entropy on sample spaces of size $n$, for all $n$. Then, must $A$ also allow a term definition of the function $f(x,y)$ appearing in the statement of Proposition~\ref{Prop-on-f}?
\end{problem}

We expect that providing an explicit definition of $\sg^2$ in terms of the operations of $\real_{ \bot,\log_2,f}$ is not possible, i.e., a negative answer to the above Problem~\ref{P1}.


\subsection{Kullback-Leibler divergence and Jensen-Shannon divergence}

There are some further information measures worth noting.  Using these primitives, \textit{Kullback-Leibler divergence} (KL divergence for short) can be defined as follows (where an empty sum produces $0$):
$$D_{\mathsf{KL}}(P||Q) =  \sum_{x\in \!V}^{}P(x)\leftmult  \log_2 \dfrac{P(x)}{Q(x)}  $$
or as an expected value:
$$D_{\mathsf{KL}}(P||Q) =  \overline{E}^V_{P}\left(  \log_2 \dfrac{P(x)}{Q(x)}  \right)$$

With  $M_{P,Q}$ as an abbreviation for $ \frac{1}{2}\cdot (P+Q)$,  the \textit{Jensen-Shannon divergence} of $P$ and $Q$ is given by its usual definition.  
 $$D_{\mathsf{JS}}(P||Q) = D_{\mathsf{KL/SO}}(P||M_{P,Q}) + D_{\mathsf{KL/SO}}(Q||M_{P,Q}) $$
Indeed, if $M_{P,Q} (s)= 0$ then $P (s) = Q(s) = 0$ so that 
 $$P(s)\leftmult  \log_2 \dfrac{P(s)}{M_{P,Q}(s)} = 0\leftmult  \log_2 \dfrac{0}{0}= 0\leftmult  \log_2 \bot = 
0\leftmult  \bot = 0$$ and similarly $Q(s)\leftmult  \log_2 \dfrac{Q(s)}{M_{P,Q}(s)}=0$ whence a summand equal to $\bot$ will not arise in the expansion of this particular inslance of Kullback-Leibler divergence.


\section{The algebraic property of flattening}\label{flattening}

Extending fields to common meadows, and then common meadows with further operations opens up many questions about their general algebraic properties. The case of common meadows has been analysed using axiomatisations but here we are in the business of adding several more operators, including familiar ($\log_2, \sg$) and novel operators ($\leftmult$). Also, here we are focussed on algebras of real numbers (not fields in general).

As a tool for the specification of entropy and cross-entropy this paper uses left sequential multiplication $\leftmult$. Is left sequential multiplication of independent interest? We will provide some further facts about it. We will start by recalling the situation for the common meadow $\real_\bot$. 


\subsection{Flattening and common meadows}
In studying meadows, where division is present, we encounter fractions and, in particular, $\frac{1}{0}$ and so infinitely many partial formulae with unclear semantics. In the common meadow, the semantics is settled by $\bot$.  Although the algebra of common meadows disturbs some of the familiar properties of fields, some very important algebraic properties can be preserved.

\begin{definition}
A {\em fracterm} is a term over a signature $\Sigma_{cm}$ for common meadows whose leading function symbol is division $\_/\_$.   A {\em flat fracterm} is a fracterm with only one occurance of the division operator.
\end{definition}

The following is a general fact about meadows (see~\cite{BergstraP2015LNCS}) applied to our meadow of reals:

\begin{theorem}\label{FF}
(Fracterm flattening ) 
For each term $t$ over the  signature $\Sigma$ of the common meadow $\real_{\bot}$ there exist $p$ and $q$ terms over $\Sigma$, both {\em not} involving $\bot$ or division, such that 
$$\displaystyle  t = \frac{p}{q}$$
in $\real_{\bot}$. Furthermore, the transformation is computable.
\end{theorem}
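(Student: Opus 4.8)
The plan is to prove the statement by structural induction on the term $t$, maintaining the invariant that every subterm can be rewritten, as an identity of $\real_\bot$, to a flat fracterm $\frac{p}{q}$ in which $p$ and $q$ contain neither $\bot$ nor the division symbol. The base cases are immediate once one observes that $\bot$ itself is flat in the required sense: writing $0 = \frac{0}{1}$, $1 = \frac{1}{1}$, $x = \frac{x}{1}$ for a variable $x$, and crucially $\bot = \frac{1}{0}$, removes every occurrence of the constant $\bot$ in favour of a division by a clean denominator.

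For the inductive steps I would supply, for each of the operations $+,-,\cdot,/$, a flattening identity combining two flat fracterms $\frac{p_1}{q_1}$ and $\frac{p_2}{q_2}$ into a single one. For the first three these are the expected identities:
\[ \frac{p_1}{q_1}+\frac{p_2}{q_2}=\frac{p_1 q_2+p_2 q_1}{q_1 q_2}, \qquad -\frac{p_1}{q_1}=\frac{-p_1}{q_1}, \qquad \frac{p_1}{q_1}\cdot\frac{p_2}{q_2}=\frac{p_1 p_2}{q_1 q_2}. \]
Since the $p_i,q_i$ are built only from variables, $0$, $1$ and the ring operations, the resulting numerators and denominators stay clean, so the invariant is preserved. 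The work here is to \emph{verify} that these identities genuinely hold in $\real_\bot$, which I would do by a case analysis on whether each of $p_1,q_1,p_2,q_2$ is $\bot$, is $0$, or is a nonzero real: on the branch where no subterm is $\bot$ and the denominators are nonzero both sides agree with the classical field computation, while on every other branch one checks that both sides collapse to $\bot$ (equivalently, one appeals to the equational axioms of common meadows from \cite{BergstraP2015LNCS,BergstraT2022TCJ}).

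The delicate case, and the one I expect to be the main obstacle, is division. The naive rule $\frac{p_1/q_1}{p_2/q_2}=\frac{p_1 q_2}{q_1 p_2}$ is \emph{wrong} in a common meadow: when $q_2=0$ the original inner denominator $p_2/q_2$ equals $\bot$, forcing the whole quotient to be $\bot$, whereas the naive right-hand side has numerator $p_1\cdot 0 = 0$ and may evaluate to an ordinary value. The underlying reason is structural: a clean (division-free, $\bot$-free) term can produce $\bot$ only by absorbing a $\bot$ input, so it can never ``detect'' that $q_2$ has become $0$. Hence all such detection must be pushed through the single surviving division, i.e.\ the denominator of the flat form must vanish exactly when it should. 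I would therefore use the protected identity
\[ \frac{p_1/q_1}{p_2/q_2}=\frac{p_1 q_2^{\,2}}{q_1 p_2 q_2}, \]
where the extra factor $q_2$ in numerator and denominator cancels on the generic branch (leaving $\frac{p_1 q_2}{q_1 p_2}$) but makes the denominator $0$ precisely when $q_2=0$. The same case analysis then shows that the two sides have identical value and identical $\bot$-behaviour for all inputs, so the invariant is preserved.

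Finally, computability is immediate: the induction is a terminating, deterministic bottom-up rewriting of the syntax tree of $t$ using the four combination rules and the base rewrites, so reading those rules as an algorithm yields the claimed effective transformation $t\mapsto(p,q)$.
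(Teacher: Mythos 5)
Your proposal is correct and takes essentially the same route as the paper, which delegates to the standard inductive flattening argument from the cited literature: structural induction on terms using exactly these combination identities (including the protected division rule $\frac{p_1/q_1}{p_2/q_2}=\frac{p_1 q_2^{\,2}}{q_1 p_2 q_2}$), justified either semantically or via the axiom set $E_{\mathsf{ftc-cm}}$. Nothing further is needed.
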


The proof of this is based on a set $E_{\mathsf{ftc-cm}}$ of axioms for common meadows -- see, e.g., \cite{BergstraT2022TCJ,BergstraT2023,BergstraT2024a}. In~\cite{BergstraT2022SACS} we have investigated conditions under which a version of arithmetic with totalized division allows fracterm flattening, with as a tentative conclusion that working in a common meadow, i.e. with $1/0 = \bot$, provides the only option for achieving fracterm flattening, though with some marginal room for variation.


\subsection{Flattening and conditionals}\label{flattening_conditionals}

We notice that the conditional operator allows for fracterm flattening.

\begin{proposition}
Suppose that we extend the common meadow of reals $\real_\bot$ with the conditional operator $x \lhd y \rhd z$ to form $\real_{\bot, x \lhd y \rhd z}$. The algebra $\real_{\bot, x \lhd y \rhd z}$ has fracterm flattening.
\end{proposition}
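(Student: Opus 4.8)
The plan is to prove the statement by structural induction on terms $t$ over the extended signature $\Sigma_{cm}\cup\{\lhd\,\rhd\}$, establishing the invariant that each $t$ equals a flat fracterm $\frac{p}{q}$ in which $p$ and $q$ are division-free and do not contain the constant $\bot$, but \emph{are} allowed to contain the conditional operator. This is the appropriate reading of fracterm flattening for the extended signature: a single, outermost division. Allowing conditionals inside $p$ and $q$ is essential, because the conditional denotes a genuinely non-rational function — for instance $1\lhd y\rhd 0$ behaves like $\sg^2(y)$ — so one cannot hope to push it into polynomial numerators and denominators. I would first record a small but crucial observation: a term that is division-free and $\bot$-free, \emph{even if it contains conditionals}, never denotes $\bot$ (quick induction: $0,1$ and variables denote reals; $+,-,\cdot$ preserve this; and a conditional $u\lhd v\rhd w$ with $v$ ordinary selects a branch $u$ or $w$, both ordinary). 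This is what makes the case analyses below go through, since every numerator and denominator we produce takes only ordinary real values.

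The base cases ($0$, $1$, $\bot=\frac{1}{0}$, variables) and the inductive cases for $+,-,\cdot$ and $/$ are handled by the fracterm calculus underlying Theorem \ref{FF}. Its combination identities are semantic equalities valid for all values of the atoms in $\real_\bot$; hence they remain valid when the atoms are replaced by our (conditional-containing, but division- and $\bot$-free) numerator and denominator terms, which by the observation always denote reals. Since $+,-,\cdot$ preserve division-freeness and $\bot$-freeness, the invariant on $p,q$ is maintained.

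The one genuinely new case — and the main obstacle — is the conditional $t = t_1\lhd t_2\rhd t_3$. By induction hypothesis write $t_i=\frac{p_i}{q_i}$ in flattened form, and flatten $t$ via two identities. First, when the test is division-free the conditional distributes over the outer division of the branches, $\frac{p_1}{q_1}\lhd p_2\rhd\frac{p_3}{q_3}=\frac{p_1\lhd p_2\rhd p_3}{q_1\lhd p_2\rhd q_3}$; this is valid because $p_2$ is never $\bot$, so only $p_2=0$ and $p_2\neq0$ arise and both sides agree (indeed they agree even when a branch itself equals $\bot$, i.e. when $q_1=0$ or $q_3=0$). Second, a fractional test $\frac{p_2}{q_2}$ is reduced to the division-free test $p_2$ by the idempotent factor $\frac{q_2}{q_2}$, which equals $1$ when $q_2\neq0$ and equals $\bot$ exactly when $q_2=0$ — which is precisely when $\frac{p_2}{q_2}=\bot$ and the whole conditional must be $\bot$. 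Combining these yields the single flat fracterm
$$ t_1\lhd t_2\rhd t_3 \;=\; \frac{q_2\cdot\bigl(p_1\lhd p_2\rhd p_3\bigr)}{q_2\cdot\bigl(q_1\lhd p_2\rhd q_3\bigr)}, $$
whose numerator and denominator are division-free and $\bot$-free, being built from the $p_i,q_i$ using $\cdot$ and the conditional.

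To finish I would verify this identity by a three-way case split on $\frac{p_2}{q_2}$: if $q_2=0$ the denominator vanishes and both sides are $\bot$; if $q_2\neq0$ and $p_2=0$ both sides reduce to $\frac{p_3}{q_3}$; if $q_2\neq0$ and $p_2\neq0$ both sides reduce to $\frac{p_1}{q_1}$ (using $\frac{q_2 a}{q_2 b}=\frac{a}{b}$ when $q_2\neq0$). The delicate points, where I would be most careful, are exactly the $\bot$-producing sub-cases ($q_2=0$, and a branch with $q_1=0$ or $q_3=0$), since the conditional deliberately breaks the absorptivity of $\bot$; the $\frac{q_2}{q_2}$ factor is precisely what restores the correct $\bot$-behaviour of a fractional test. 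An equivalent route, if one prefers to isolate the field algebra entirely, is first to pull every conditional above $+,-,\cdot,/$ (using that the conditional commutes with each, together with the test-expansion $x\lhd(a\lhd b\rhd c)\rhd z=(x\lhd a\rhd z)\lhd b\rhd(x\lhd c\rhd z)$ to make all tests conditional-free), then flatten the resulting pure-meadow leaves and tests by Theorem \ref{FF}, and finally collapse the conditional tree bottom-up using the two identities above.
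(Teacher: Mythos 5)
Your proposal is correct and follows essentially the same route as the paper: an induction over the extended term algebra using the fracterm calculus for common meadows, augmented with identities that pull divisions out of the three argument positions of $x \lhd y \rhd z$. The paper does this with three separate unconditional identities (one per argument position, using the multiplier $y'$ for a fractional test), of which your single combined identity $\frac{p_1}{q_1}\lhd \frac{p_2}{q_2}\rhd \frac{p_3}{q_3} = \frac{q_2\cdot(p_1\lhd p_2\rhd p_3)}{q_2\cdot(q_1\lhd p_2\rhd q_3)}$ is in effect the composite; your explicit statement of the invariant (numerators and denominators division-free and $\bot$-free but possibly containing conditionals, hence never denoting $\bot$) is a useful point the paper leaves implicit.
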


\begin{proof}
Proofs of flattening for common meadows are derivations from equational axiomatisations \cite{BergstraT2022TCJ,BergstraT2023,BergstraT2024a}.  The new conditional operator satisfies these identities
$$\frac{x}{x'} \lhd y \rhd z = \frac{x \lhd y \rhd z}{x' \lhd y \rhd 1}$$,
$$x\lhd \frac{y}{y'} \rhd z = \frac{(x \lhd y \rhd z)\cdot y'}{y'}$$,
$$x\lhd y \rhd \frac{z}{z'}  = \frac{x \lhd y \rhd z}{1\lhd y \rhd z'}$$
When added to  the  collection $E_{\mathsf{ftc-cm}}$ of axioms for common meadows they can be used as inductive clauses to deal with the extended set of fracterms.
\end{proof}

\begin{lemma}\label{LogCase}
    The algebra $A= \real_{\bot,\,\log_2}$ satisfies the following equation:
    $$\log_2 \frac{x}{y} = \frac{\log_2 x^2 - \log_2 y^2 }{2 + 0 \cdot \log_2 (x \cdot y)}$$
\end{lemma}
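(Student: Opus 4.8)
The plan is to verify the equation pointwise. Since the claim is that the identity holds in $A = \real_{\bot,\log_2}$, it suffices to show that for every assignment of values $x,y \in \real \cup \{\bot\}$ the two sides evaluate to the same element. I would organise the whole argument around a single dichotomy: whether or not $x$ and $y$ are nonzero reals of the same sign.

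First I would analyse the denominator $D = 2 + 0 \cdot \log_2(x \cdot y)$. Using the absorption laws for $\bot$ together with the convention $\log_2(z) = \bot$ for $z \leq 0$ and for $z = \bot$, one checks that $D = 2$ when $x\cdot y$ is a positive real and $D = \bot$ in every other case, including $x \cdot y = 0$, $x\cdot y < 0$, and $x$ or $y$ equal to $\bot$. The operative fact here is that $0 \cdot \bot = \bot$ rather than $0$, so the summand $0 \cdot \log_2(x\cdot y)$ acts as a detector that propagates $\bot$ exactly when $x \cdot y$ fails to be positive. In parallel I would record when the left-hand side is a genuine real: $\log_2(x/y) \neq \bot$ precisely when $x/y$ is a positive real. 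The key elementary observation linking the two sides is that these conditions coincide, since $x \cdot y > 0$ iff $x/y > 0$ iff $x,y$ are nonzero reals of equal sign; and if either argument equals $\bot$ then both sides are $\bot$ by absorption.

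With this in hand the proof splits into two cases. If $x,y$ are \emph{not} nonzero reals of the same sign, then $D = \bot$, so the right-hand side is $\frac{\log_2 x^2 - \log_2 y^2}{\bot} = \bot$, while the left-hand side is likewise $\bot$; equality holds. If $x,y$ \emph{are} nonzero reals of the same sign, then $D = 2$ and the computation reduces to ordinary real analysis: invoking the identity $\log_2 z^2 = 2\log_2|z|$, valid for every nonzero real $z$, the numerator becomes $2\log_2|x| - 2\log_2|y| = 2\log_2|x/y| = 2\log_2(x/y)$, the last equality because $x/y > 0$. Hence the right-hand side equals $\frac{2\log_2(x/y)}{2} = \log_2(x/y)$, matching the left-hand side. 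This is essentially observation (iii) applied after passing to squares.

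The step requiring the most care is the subcase where $x$ and $y$ are both negative. There $\log_2 x$ and $\log_2 y$ are individually $\bot$, yet the formula is correct precisely because it refers to $\log_2 x^2$ and $\log_2 y^2$, whose arguments $x^2,y^2$ are positive; this is exactly what the passage to squares is designed to exploit. The only other delicate point is to confirm that the correction term $0\cdot\log_2(x\cdot y)$ forces $D = \bot$ across all the remaining cases without spuriously collapsing the $D = 2$ case, which is where the behaviour $0 \cdot \bot = \bot$ under $\bot$-absorption is indispensable.
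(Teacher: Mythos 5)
Your proof is correct and follows essentially the same route as the paper's: a case analysis driven by the observation that the correction term $0 \cdot \log_2(x \cdot y)$ forces the denominator to $\bot$ exactly when the left-hand side is $\bot$, with the squaring trick handling the case of two negative arguments. Your organisation is slightly more unified (collapsing all failure cases into one via the denominator, and treating both same-sign cases together through $\log_2 z^2 = 2\log_2\lvert z\rvert$) and, unlike the paper, you explicitly cover the arguments equal to $\bot$, but the underlying mechanism is identical.
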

\begin{proof}
    If $x= 0$ or $y=0$, then both sides take the value $\bot$. If both $x>0$ and $y>0$ then $ x \cdot y > 0$ and therefore $0 \cdot \log_2 (x \cdot y)= 0$ so that $$\frac{\log_2 x^2 - \log_2 y^2 }{2 + 0 \cdot \log_2 (x \cdot y)}= 
    \frac{2 \cdot \log_2 x - 2 \cdot \log_2 y }{2 }= \log_2 x - \log_2 y= \log_2 \frac{x}{y}$$
    If $x <0$ and $y < 0$ then the same argument applies because $x\cdot y >0$ so that $0 \cdot \log_2 (x \cdot y) = 0$. Finally, if $x > 0 $ and $y < 0$ or $x < 0$ and $y > 0$ then both sides evaluate to $\bot$.
\end{proof}

It follows that fracterm flattening works in $A= \real_{\bot,x \lhd y \rhd z,\log_2}$ as well. Thus, fracterm flattening works for $\leftmult$ in the presence of the conditional operator, but we find that it fails without. 

\begin{proposition} 
Fracterm flattening fails in $A= \real_{\bot,\,\leftmult}$.
\end{proposition}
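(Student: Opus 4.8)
The plan is to exhibit a single term whose value function cannot be written as a quotient $p/q$ of two $\bot$-free, division-free terms, exploiting precisely the feature that distinguishes $\leftmult$ from ordinary multiplication, namely its failure to be $\bot$-absorptive (as already noted in the remark that $0 \leftmult \bot = 0$). The witness I would use is the term $t(x) = x \leftmult \bot$, built from the variable $x$ and the constant $\bot$, both available in the signature of $\real_{\bot,\leftmult}$.

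First I would record the behaviour of the witness by unfolding $x \leftmult y = (x\cdot y)\lhd x \rhd 0$. At $x=0$ the test selects the third branch, giving $0 \leftmult \bot = 0$; for any ordinary $a \neq 0$ the test selects the first branch, giving $a \leftmult \bot = a \cdot \bot = \bot$. Thus $t$ takes the ordinary value $0$ at the origin and the value $\bot$ at every other real argument, an "isolated ordinary point".

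Next I would pin down the shape of any $\bot$-free, division-free term, which is where the real content lies. On ordinary real arguments $\leftmult$ coincides with $\cdot$, since $x \leftmult y = x\cdot y$ whenever $x \neq 0$ and both sides are $0$ when $x=0$. A straightforward induction on term structure then shows that each of $+,-,\cdot,\leftmult$ preserves ordinariness and that, applied to ordinary inputs, any term built from $0,1,+,-,\cdot,\leftmult$ denotes a genuine polynomial function with integer coefficients. In particular such a term is total (never equal to $\bot$) and continuous on $\real$.

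Finally I would derive the contradiction. Suppose flattening held, so $t = p/q$ in $\real_{\bot,\leftmult}$ with $p,q$ both $\bot$-free and division-free. Evaluating at an ordinary $a \neq 0$ gives $t(a) = p(a)/q(a)$ with $p(a),q(a)$ ordinary; since $t(a)=\bot$, this forces $q(a)=0$. As $q$ denotes a polynomial vanishing at every $a \neq 0$, it is the zero polynomial, so $q(0)=0$ as well. But then the right-hand side at $x=0$ equals $p(0)/0 = \bot$, whereas $t(0)=0$, a contradiction; hence no flattening of $t$ exists. The main obstacle is the inductive step of the third paragraph: one must argue carefully both that $\bot$-free, division-free terms never produce $\bot$ on ordinary inputs (using the joint absence of $\bot$ and division together with closure of $+,-,\cdot,\leftmult$ under ordinariness) and that they agree with polynomials on $\real$, so that the "infinitely many roots" argument can force $q(0)=0$. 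Everything else is routine once that structural fact is in place, and the conceptual point being exploited is exactly the non-absorptivity of $\leftmult$: flattening can only yield denominators that are honest continuous polynomial functions, and no such function can vanish on $\real \setminus \{0\}$ while producing an ordinary value at $0$.
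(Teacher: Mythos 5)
Your local analysis of the witness $x \leftmult \bot$ is correct (value $0$ at $x=0$, value $\bot$ at every real $x\neq 0$), and so is your structural lemma that $\bot$-free, division-free terms over $0,1,+,-,\cdot,\leftmult$ denote integer polynomials on real arguments, since on ordinary reals $\leftmult$ coincides with $\cdot$. The gap lies in what you are refuting. A flat fracterm is defined in the paper as a fracterm with a single occurrence of division; nothing in that definition forbids $\bot$ or $\leftmult$ from occurring inside the numerator and denominator, and the paper's own proof of this proposition explicitly permits $p$ and $q$ to contain both, eliminating them afterwards by rewriting. Under that reading your witness is useless: $x \leftmult \bot$ contains no division at all, so it is trivially equal to the flat fracterm $\frac{x \leftmult \bot}{1}$, and flattening does not fail on it. Your argument only excludes flat forms whose numerator and denominator are additionally $\bot$-free; that extra hypothesis is imported from Theorem~\ref{FF}, which is stated for the base signature, and it is precisely the restriction your term is engineered to defeat. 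So either you have silently strengthened the definition of flattening, in which case you prove a different (weaker) proposition, or, under the paper's notion, the proof collapses at the very first step.

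The paper avoids this by taking the witness $x \leftmult \frac{1}{x}$, which as a function is $\sg^2(x)$ and which genuinely contains a division that must be removed; it then rewrites away any occurrences of $\bot$ and $\leftmult$ in $p$ and $q$ and derives a contradiction from the continuity of the resulting rational function against the jump of $\sg^2$ at $0$. Your "polynomial with infinitely many roots" endgame is, if anything, more elementary than the continuity argument and transplants cleanly: if $\sg^2(x)=\frac{p'(x)}{q'(x)}$ with $p',q'$ polynomials, then $p'(a)=q'(a)\neq 0$ for all $a\neq 0$, hence $p'=q'$ identically, hence the quotient at $0$ is $1$ or $\bot$ but never $0$. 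To repair your proposal, keep your third and fourth paragraphs but (i) replace the witness by a term actually containing a division, such as $x \leftmult \frac{1}{x}$, and (ii) add the missing reduction showing that division-free $p$ and $q$ possibly containing $\bot$ and $\leftmult$ can be brought to polynomial form, which is the step the paper spends most of its proof on.
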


\begin{proof}
The proof will be given by contradiction. Assuming fracterm flattening for $A$, we may assume that some terms $p$ and $q$ with variable $x$ at most are such that $\sg^2(x)=x\leftmult \frac{1}{x} = \frac{p(x)}{q(x)}$, with $p$ and $q$ division free. 
 
Now,  the numerator and denominator of the fracterm  $\frac{p(x)}{q(x)}$ are simplified by means of repeated application of the following five rules (modulo commutativity and associativity of addition and ordinary multiplication) until no more applications of these rules are possible: 
\begin{center}
$t + \bot \Rightarrow \bot$, 
 
 $t \cdot  \bot \Rightarrow \bot$, 
 
$ - \bot \Rightarrow \bot$, 
 
 $t \leftmult \bot \Rightarrow t $, 
\end{center} 
\noindent and, under the condition  that there is no occurrence of $\bot$ in $r$: 

$$t \leftmult r \Rightarrow t \cdot r.$$
 
\noindent We notice that each of the rewrites
is sound for all $t$ (an argument which depends on the absence of division which might introduce $\bot$ in the absence of an explicit occurrence of $\bot$). 
Suppose that a result of these rewrites is  $p'$ for $p$ and  and $q'$ for $q$. Notice that $q'$ either equals $\bot$, which contradicts the assumption that 
$\sg^2(x)= \frac{p(x)}{q(x)}$ for say $x = 0$, or otherwise $q'$ contains no occurrence of $\bot$ and no occurrence of $\leftmult$ either. It follows that  for all $x$, 
 including $\bot$, 
 $$x\leftmult \frac{1}{x}  = \frac{p(x)}{q(x)} =\frac{p'(x)}{q'(x)}.$$ 
 Now, for $x \neq \bot$ it must be the case that $q'(x) \neq 0$ and 
 by consequence as a function on reals $\frac{p'(x)}{q'(x)}$ is both total and continuous. 
 The latter is not the case for $\sg^2$ so that a contradiction is obtained which proves the failure of 
 fracterm flattening for $\leftmult$.
 \end{proof} 
 If $\sg^2$ is available in combination with $\leftmult$, then fracterm flattening becomes possible as is shown by these equations, both of which are easily validated using case distinctions on $\sg(x)$, $\sg(x')$, $\sg(y)$ and $\sg(y')$:
 $$\sg^2(\frac{x}{y})= \frac{\sg^2(x)}{\sg^2(y)}$$
  $$\frac{x}{x'} \leftmult y = \frac{x \leftmult y}{x'}$$
 $$x \leftmult \frac{y}{y'} = 
 \frac{x^3 \leftmult (y \cdot (y' + 1 - \sg^2(x)))}{(x \leftmult y')^2 + 1 - \sg^2(x))}$$

For basic information on fracterm flattening in the case of common meadows, we refer to~\cite{BergstraT2022SACS}, where it is shown that fracterm flattening for the signature of common meadows almost implies that $\frac{1}{0}$ takes an absorptive value.


\subsection{Fracterm flattening with sign and logarithm}

The algebra $\real_{\bot,\log_2,\sg^2}$ allows expressions of entropy and cross entropy for all finite sizes. Moreover, its default peripheral value $\bot$ is fully absorptive. We notice that $\real_{\bot,\log_2,\sg^2}$
allows fracterm flattening.

\begin{proposition} 
The algebra $\real_{\bot,\log_2,\sg^2}$ allows fracterm flattening.
\end{proposition}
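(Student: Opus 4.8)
The plan is to prove the statement by structural induction on terms over the signature of $\real_{\bot,\log_2,\sg^2}$, carrying throughout the invariant that every term $t$ is provably equal to a flat fracterm $\frac{p}{q}$ whose numerator $p$ and denominator $q$ contain neither the division symbol nor the constant $\bot$ (although they may contain $\log_2$ and $\sg^2$). The base cases (variables, the constants $0,1$, and $\bot=\frac{1}{0}$) and the field-operation cases $+,-,\cdot$ and division are handled exactly as in Theorem~\ref{FF}: the axioms $E_{\mathsf{ftc-cm}}$ combine flat fracterms built by the meadow operations into a single flat fracterm, and since those axioms act only on the division layer they are indifferent to the presence of $\log_2$ and $\sg^2$ inside numerators and denominators. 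Hence all the genuinely new work lies in the two cases $\log_2(s)$ and $\sg^2(s)$, each of which I would treat by first flattening $s$ to $\frac{a}{b}$ using the induction hypothesis and then applying a dedicated identity that pushes the division to the outside.

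For the logarithm case, suppose $s$ has been flattened to $\frac{a}{b}$ with $a,b$ division-free and $\bot$-free. Lemma~\ref{LogCase} then gives
\[
\log_2 s = \log_2\frac{a}{b} = \frac{\log_2 a^2 - \log_2 b^2}{2 + 0\cdot \log_2(a\cdot b)}.
\]
The key point to verify is that the right-hand side again meets the invariant: the numerator $\log_2 a^2 - \log_2 b^2$ and the denominator $2 + 0\cdot \log_2(a\cdot b)$ contain no division and no occurrence of the symbol $\bot$, because the only new subterms are the logarithms $\log_2 a^2$, $\log_2 b^2$ and $\log_2(a\cdot b)$, whose arguments are assembled from the already division-free, $\bot$-free terms $a,b$ using $\cdot$ and squaring. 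So no fresh division and no explicit $\bot$ is introduced inside the new logarithms, and the invariant is preserved. Soundness of this step is precisely the content of Lemma~\ref{LogCase}, so nothing further is required.

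For the sign case, with $s$ flattened to $\frac{a}{b}$ as before, I would use the identity $\sg^2(\frac{x}{y})=\frac{\sg^2(x)}{\sg^2(y)}$ recorded earlier, validated by a case distinction on whether the argument is positive, negative, zero, or $\bot$. Applying it yields $\sg^2(s)=\frac{\sg^2(a)}{\sg^2(b)}$, a flat fracterm whose numerator $\sg^2(a)$ and denominator $\sg^2(b)$ are patently division-free and $\bot$-free since $a,b$ are. This closes the induction, and since every rewrite used comes either from $E_{\mathsf{ftc-cm}}$, from Lemma~\ref{LogCase}, or from the $\sg^2$ identity, the transformation is moreover effective.

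The part requiring the most care -- the ``obstacle'' such as it is -- is not the soundness of the individual identities, which are given, but the bookkeeping of the invariant across the two new operator cases: one must be sure that pushing division outward through $\log_2$ never smuggles a division back inside the argument of the freshly introduced logarithms, and likewise that no explicit $\bot$ is ever produced. The particular shape of the Lemma~\ref{LogCase} identity, whose right-hand side only ever feeds the division-free products $a^2$, $b^2$ and $a\cdot b$ into $\log_2$, is exactly what guarantees this, so the induction terminates and stays within the class of admissible flat fracterms.
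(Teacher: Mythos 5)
Your proof is correct and follows essentially the same route as the paper: the standard inductive flattening argument for common meadows, augmented with exactly the two identities $\sg^2(\frac{x}{y})=\frac{\sg^2(x)}{\sg^2(y)}$ and the equation of Lemma~\ref{LogCase} to handle the new operator cases. Your additional bookkeeping of the division-free, $\bot$-free invariant through those two cases is a sound elaboration of what the paper leaves implicit.
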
  

\begin{proof} The familiar inductive proof for fracterm flattening in  common meadows (including $\real_{\bot}$) is augmented with two cases: $$\sg^2(\frac{x}{y})= \frac{\sg^2(x)}{\sg^2(y)}$$ and, following Lemma~\ref{LogCase}
$$\log_2 \frac{x}{y} = \frac{\log_2 x^2 - \log_2 y^2 }{2 + 0 \cdot \log_2 (x \cdot y)}$$
\end{proof} 

The minimal subalgebra of $\real_{\bot,\log_2,\sg^2}$  is a data  type of importance for probability theory about which we do not have much information, for instance the following problems are left open:

\begin{problem}
Let $A =\real_{\bot,\log_2,\sg^2}$ and let $B$ be the minimal subalgebra of $A$. Is $B$ a computable algebra? If not, is $B$ semicomputable and, if so, is the equational theory of $B$ decidable?
\end{problem}


\section{Concluding remarks}\label{concluding_remarks}

In mapping the problem, we have encountered a number of algebras that have expanded the common meadow of real numbers $\real_{\bot}$, which is our a basic platform for calculation with real numbers, including:
$$\real_{ \bot}, \ \real_{ \bot,\log_2}, \  \real_{ \bot,\log_2,\leftmult}, \real_{ \bot,\log_2, \sg} \ \real_{\bot,\log_2,\sg^2}.$$ 
We will conclude with some comments on expanding fields with operations and peripherals, and a note on probability theory.


\subsection{Fields, meadows, and expansions}
Most useful functions on the real numbers are approximable by polynomials, thanks to Weierstrass's Theorem,\footnote{A continuous function on a compact interval can be approximated to arbitrary precision by a polynomial.} which makes the field of real numbers a truly fundamental algebraic structure for mathematics and its applications. Recall that the field of real numbers has constants $0$ and $1$, and operations $x+y , -x, x \cdot y$.
and that polynomials are normal forms for terms over these operations. However, an approximation of a function is not an exact and complete characterisation (though the approximation process via power series generate one). Many functions are known not to be definable as polynomials, i.e., by terms over fields, including  elementary functions such as $\frac{x}{y}, \sqrt x, \sin(x), \cos(x),\tan (x), e^x, \log(x), \ldots $ etc.
Yet, such special functions appear in important formulae everywhere in pure and applied mathematics. The point is that the neat, understandable, classic formulae and equations of mathematics and its applications are made of terms and that, in turn, they are made from the operations of algebras of reals.

To allow formulae to be constructed that specify functions exactly, special functions can be added as operators to the field operations and composed to make larger classes of terms. The algebraic and logical methods that underpin computing deal with terms made from these operators whose inter-relationships are  analysed axiomatically. This is the approach of abstract data types \cite{EhrichWL1997}.  

An early algebraic appreciation of classes of elementary functions was made by G H Hardy in \cite{Hardy1905}.  Hardy defines an \textit{elementary function} to be a member of the class of functions that comprises:

(i) rational functions

(ii) algebraical functions, explicit or implicit\footnote{Here  explicit or implicit includes root extraction (explicit) and solutions of polynomial equations (implicit).}

(iii) the exponential function $e^x$

(iv) the logarithmic function $log \ x$

(v) all functions which can be defined by means of any finite combination of the symbols proper to the preceding four classes of functions.

Hardy's monograph addresses the questions:
\smallskip

\textit{If f(x) is an elementary function, how can we determine whether its integral is also an elementary function? If the integral is an elementary function, how can we find it?}
\smallskip

Problems of these kinds for different, simpler classes of functions were studied by Richardson who proved that the integration of functions in his class 
is undecidable \cite{Richardson1968}.  Adding operators to fields certainly complicates the logical theories of the extended structures, as Tarski's Exponential Problem for real fields demonstrates.


\subsection{Peripherals for infinity}\label{peripherals4infinity}

Recall the discussion of the peripheral $\bot$ in Section \ref{default_peripheral_value} and its possible replacement by $\infty$. Of course, it is common to write $\infty$ for an infinite value in some extensions of the rational or real numbers. Unlike, $\bot$, the symbol $\infty$ brings with it some widely accepted intuitions.
Suppose a symbol, say $\star$, is a possible `infinite value' then it must satisfy these rules, at least:

(i) $\star$ is a peripheral number, i.e., not an ordinary number

(ii) for real $a > 0$: $a+ \star = \star$

(iii) for real $a > 0$: $a \cdot \star = \star$.\\

\noindent However, these three properties need to be  accompanied by further algebraic properties, which reveal that the intuitions can be diverse and disparate.  Actually, there is a significant variety of semantic options for infinite values, a variety which merits a detailed survey. In our own research programme we have encountered different infinities as follows:

By $\infty_{\mathsf{trp}}$ we denote the positive infinite value of the transreals of~\cite{AndersonVA2007VG}. 

By $\infty_{\mathsf{whu}}$ we denote the unsigned infinite value of wheels (see~\cite{Carlstroem2004MSCS}). 

By $\infty_{\mathsf{etrp}}$ 
we denote the infinite value of the entropic transreals of~\cite{BergstraT2025Entropy}. 

By ~$\infty_\mathsf{strp}$ we denote the peripheral for positive infinity in the symmetric transrationals of~\cite{BergstraT2022LNCS}. 

The value $\bot$ of common meadows meets the conditions (i), (ii), and (iii) above, and so we can consider $\bot$ also as an unsigned infinite value.  In Section \ref{default_peripheral_value} we called it absorptive infinity. Let us denote it  $\infty_\mathsf{cmu}$. Shortly, we will propose another infinity $\infty_\mathsf{cmp}$ for common meadows that serves as a peripheral for a signed positive infinite value.

The signed and unsigned infinities are distinguished by tagging an unsigned infinity with $\mathsf{u}$ and a (positive) signed infinity with $\mathsf{p}$. 

We notice that the following equations are valid:
\medskip

\noindent Common meadows: $\infty_{\mathsf{cmu}} = \frac{1}{0}$, and shortly we will have $\infty_{\mathsf{cmp}} \neq \frac{1}{0}$

\noindent Transreals \cite{BergstraT2020}: $\infty_{\mathsf{trp}} = \frac{1}{0}$ 

\noindent Wheels \cite{BergstraT2021a,Carlstroem2004MSCS}: $\infty_{\mathsf{whu}} = \frac{1}{0}$

\noindent Entropic transreals \cite{BergstraT2025Entropy}: $\infty_{\mathsf{etrp}} = \frac{1}{0}$

\noindent Symmetric transrationals \cite{BergstraT2022LNCS}: $\infty_{\mathsf{strp}} \neq \frac{1}{0}$.
\medskip

Another way of totalising division that is often used is to choose a number as a default value; for this $1/0 = 0$ and $x/0$, are easy choices. This semantic option we have called \textit{Suppes-Ono division} after some early studies \cite{Suppes1957,Ono1983}. What might this option mean for our information measures? Upon expanding an meadow of reals having Suppes-Ono division with a logarithm function, and making the logarithm function total by adopting $\log_2(x) = 0$ for $x \leq 0$ we find that the algebra works well for defining entropy, but not for defining cross-entropy. Thus, Suppes-Ono arithmetic is not sufficiently well-equipped for the description of the basic information theoretic concepts.


\subsection{Adding signed infinities to a signed common meadow}

The common meadow of reals $\real_\bot$ can be further extended with a peripheral $\infty_\mathsf{cmp}$ for a positive infinite value and with its negative counterpart $-\infty_\mathsf{cmp}$. The resulting algebra of reals we denote $\real_{\bot,\infty_\mathsf{cmp}}$.


With the understanding that addition and multiplication are commutative and associative in $\real_{\bot,\infty_\mathsf{cmp}}$, the various operations are defined via these equations:
\medskip

$\infty_\mathsf{cmp} + \infty_\mathsf{cmp} = \infty_\mathsf{cmp} \cdot \infty_\mathsf{cmp} = \infty_\mathsf{cmp}$

$ \infty_\mathsf{cmp} + (-\infty_\mathsf{cmp}) = \bot$

$ \frac{x}{\infty_\mathsf{cmp}} = \bot$ 

$0 \cdot \infty_\mathsf{cmp} = 0$

$a \in \real \to \infty_\mathsf{cmp} +a = \infty_\mathsf{cmp}$

$a \in \real~ \& ~a >0 \to \infty_\mathsf{cmp} \cdot a = \infty_\mathsf{cmp}$

$a \in \real ~\&~ a <0 \to \infty_\mathsf{cmp} \cdot a = -\infty_\mathsf{cmp}$

$\sg(\infty_\mathsf{cmp}) =1$ 

$\sg(-\infty_\mathsf{cmp}) =-1$.
\medskip

The new algebra is no longer a common meadow:

\begin{lemma}
Distributivity is lost in $\real_{\bot,\infty_\mathsf{cmp}}$.
\end{lemma}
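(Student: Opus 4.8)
The plan is to refute the distributive law $x \cdot (y + z) = x \cdot y + x \cdot z$ by exhibiting a single counterexample, rather than arguing abstractly. The natural candidate is forced by the shape of the defining equations: I want an instance where the inner sum $y + z$ collapses to a value on which multiplication by $\infty_\mathsf{cmp}$ is well-behaved, while the two separate products $x\cdot y$ and $x\cdot z$ land on opposite signed infinities whose sum is $\bot$. The simplest such triple is
$$x = \infty_\mathsf{cmp}, \qquad y = 1, \qquad z = -1.$$

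The key steps are then just two evaluations using the listed equations. On the left, $y + z = 1 + (-1) = 0$, and the equation $0 \cdot \infty_\mathsf{cmp} = 0$ gives $x \cdot (y+z) = \infty_\mathsf{cmp} \cdot 0 = 0$. On the right, the sign rules give $x \cdot y = \infty_\mathsf{cmp} \cdot 1 = \infty_\mathsf{cmp}$ (since $1 > 0$) and $x \cdot z = \infty_\mathsf{cmp} \cdot (-1) = -\infty_\mathsf{cmp}$ (since $-1 < 0$), so that $x\cdot y + x\cdot z = \infty_\mathsf{cmp} + (-\infty_\mathsf{cmp}) = \bot$ by the defining equation for the sum of opposite infinities. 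Since $0 \neq \bot$, the two sides disagree and distributivity fails.

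There is no real technical obstacle here; the only thing worth being careful about is conceptual, namely identifying precisely which design choices are incompatible. The failure is driven by exactly two of the defining equations pulling in opposite directions: $0 \cdot \infty_\mathsf{cmp} = 0$ forces a finite outcome after the cancellation $1 + (-1) = 0$ happens \emph{inside} the product, whereas $\infty_\mathsf{cmp} + (-\infty_\mathsf{cmp}) = \bot$ forces an absorptive outcome when the same cancellation is attempted \emph{after} distributing. I would flag this tension explicitly, since it shows the loss of distributivity is structural (inherent to having both a genuine signed infinity with $0\cdot\infty_\mathsf{cmp}=0$ and an absorptive $\bot$ for $\infty_\mathsf{cmp}-\infty_\mathsf{cmp}$) and not an artifact of the particular witnesses chosen.
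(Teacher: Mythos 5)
Your proof is correct and takes essentially the same approach as the paper: a single counterexample in which the distributed sum $\infty_\mathsf{cmp} + (-\infty_\mathsf{cmp})$ collapses to $\bot$ while the undistributed product does not. The paper uses the witnesses $2$ and $-1$ (so its left-hand side is $\infty_\mathsf{cmp}\cdot 1=\infty_\mathsf{cmp}$ rather than $\infty_\mathsf{cmp}\cdot 0 = 0$), but this is an immaterial difference.
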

\begin{proof}
To see this, calculate:
$$\infty_\mathsf{cmp} \cdot (2-1) = \infty_\mathsf{cmp} \cdot 1 = \infty_\mathsf{cmp}$$
while 
$$\infty_\mathsf{cmp} \cdot 2  - \infty_\mathsf{cmp} \cdot 1 = \infty_\mathsf{cmp} - \infty_\mathsf{cmp} = \bot.$$
\end{proof}
 
\noindent However, fracterm flattening is preserved in $\real_{\bot,\infty_\mathsf{cmp}}$.

What effect might this construction with $\infty_\mathsf{cmp}$ -- rather than $\infty_\mathsf{cmu}$ aka $\bot$ -- have on the formulae for our information measures?

In the presence of logarithm, it is now plausible to adopt 
$$\log_2 0 = -\infty_\mathsf{cmp}$$
instead of $\log_2 0 =  \bot (= \infty_\mathsf{cmu})$, while also maintaining 
$$\log_2 a = \bot \ \textrm{for} \ a < 0.$$ 
So, we obtain the expansion $\real_{\bot,\infty_\mathsf{cmp},\log_2}$ of $\real_{\bot,\infty_\mathsf{cmp}}$.

\begin{lemma}
In $\real_{\bot,\infty_\mathsf{cmp}}$, as $\log_2 \frac{1}{0} = \log_2 \bot = \bot$,  the equation $\log_2 \frac{1}{x} = - \log_2 x$ fails for $x=0$.
\end{lemma}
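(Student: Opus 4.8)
The plan is to prove the inequation by direct evaluation of both sides at the single critical argument $x=0$, since for every other real $x$ the conventional identity $\log_2\frac{1}{x} = -\log_2 x$ is unaffected by the new peripherals and continues to hold; only $x=0$ can break it. First I would evaluate the left-hand side, where the statement already records the relevant chain: in the common meadow underlying $\real_{\bot,\infty_\mathsf{cmp}}$ division by zero yields $\frac{1}{0}=\bot$, and since $\bot$ remains absorptive for $\log_2$ we obtain $\log_2\frac{1}{0}=\log_2\bot=\bot$.

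Next I would evaluate the right-hand side under the newly adopted convention $\log_2 0 = -\infty_\mathsf{cmp}$. Negating gives $-\log_2 0 = -(-\infty_\mathsf{cmp}) = \infty_\mathsf{cmp}$, using that unary minus flips the sign of the signed peripheral, consistent with $\sg(-\infty_\mathsf{cmp}) = -1$ and $\sg(\infty_\mathsf{cmp}) = 1$. Thus at $x=0$ the left-hand side equals $\bot$ while the right-hand side equals $\infty_\mathsf{cmp}$, and it remains only to observe that these two values are not equal.

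The conclusion therefore rests on the one genuinely structural point: that $\bot$ and $\infty_\mathsf{cmp}$ are distinct elements of $\real_{\bot,\infty_\mathsf{cmp}}$. This is exactly where the construction matters — $\infty_\mathsf{cmp}$ is introduced as a separate signed peripheral, witnessed for instance by $\sg(\infty_\mathsf{cmp}) = 1$ whereas $\sg(\bot) = \bot$, so the two cannot be identified. Given this separation, $\bot \neq \infty_\mathsf{cmp}$ and hence $\log_2\frac{1}{x} = -\log_2 x$ fails at $x=0$. The main (and essentially only) obstacle is conceptual rather than computational: one must be certain that adopting $\log_2 0 = -\infty_\mathsf{cmp}$ genuinely introduces a value distinct from $\bot$, rather than collapsing the $-\log_2 0$ branch back to $\bot$; once distinctness of the peripherals is in hand, the failure of the identity is immediate.
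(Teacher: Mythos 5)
Your proposal is correct and matches the paper's (implicit) argument exactly: the paper embeds the left-hand computation $\log_2\frac{1}{0}=\log_2\bot=\bot$ in the lemma statement itself and relies on the adopted convention $\log_2 0=-\infty_\mathsf{cmp}$ to make the right-hand side $\infty_\mathsf{cmp}$, a peripheral distinct from $\bot$. You spell out the same evaluation on both sides and correctly isolate the distinctness of the two peripherals as the essential point.
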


Thus, we find that the two defining equations for entropy and cross entropy can now be maintained in their original form \textit{without division} (i.e., and without involving sequential multiplication), but the two other forms of defining equations involving division fail for entropy as well as for cross entropy. 

Returning to Example~\ref{example2} we find the following modification of it:
\begin{example}
{\em Suppose $P$ and $Q$ are probability mass functions with the property that for some $x \in V$, $P(x) = \frac{1}{2}$ and $Q(x)= 0$. Then 
 $$-P(x) \cdot \log_2  Q(x) =- \frac{1}{2} \cdot \log_2 0 = -\frac{1}{2} \cdot (-\infty_{\mathsf{cmp}}) = -\infty_\mathsf{cmp}.$$ }
 \end{example}
Signed infinity $\infty_\mathsf{cmp}$ supports the use of two out of four classic defining expressions for entropy and cross entropy, thereby enlarging the common meadow and preserving flattening. In contrast, working in the entropic transreals of~\cite{BergstraT2025Entropy} supports the use of each of the four defining expressions as surveyed in the introduction.


\subsection{Stating the Bayes-Price theorem in common meadows}

Finally, information theory is based on probability theory, which reminds us that elementary probability theory is itself an interesting area for applying algebras with total operators and peripherals. We briefly consider conditional probability and the Bayes-Price Theorem in $\real_\bot$.\footnote{Meadows with Suppes-Ono division allows a plausible development of elementary probability theory (see e.g.,~\cite{Bergstra2019SACS}).}
 
A probability mass function $P$ on a finite event space $S$, which is structured as a Boolean algebra \cite{Padmanabhan1983}, may only have non-negative values different from $\bot$. Now, conditional probability is given by $P(A|B) = \frac{P(A \wedge B)}{P(B)}$. Thus, a conditional probability may take value $\bot$. The Bayes-Price Theorem, commonly and simply stated as $\mathsf{PBT}$:
$$P(A|B) = \frac{P(B|A) \cdot P(A)}{P(B)}$$ 
while paying little attention to additional conditions required for division \cite{Bayes-Price1763}.  

\begin{proposition}
In a common meadow of reals $\real_\bot$, 
$$P(A) \neq 0 \vee P(B) = 0 \to P(A|B) = \frac{P(B|A) \cdot P(A)}{P(B)}.$$
\end{proposition}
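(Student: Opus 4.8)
The plan is to verify the implication by a direct case analysis driven by the definition of conditional probability $P(A|B) = \frac{P(A \wedge B)}{P(B)}$ in the common meadow $\real_\bot$, together with the absorptive behaviour of $\bot$ under division and multiplication. The statement is a conditional whose hypothesis is the disjunction $P(A) \neq 0 \vee P(B) = 0$, so I would split on which disjunct holds, and further on whether $P(B) = 0$, since that is precisely the case in which division produces $\bot$.

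First I would handle the case $P(B) = 0$. Here the left-hand side is $P(A|B) = \frac{P(A \wedge B)}{P(B)} = \frac{P(A \wedge B)}{0} = \bot$, using totalised division with $\frac{x}{0} = \bot$. On the right-hand side, the denominator is again $P(B) = 0$, so $\frac{P(B|A) \cdot P(A)}{P(B)} = \frac{P(B|A)\cdot P(A)}{0} = \bot$ as well (the numerator is some value in $\real \cup \{\bot\}$, but division by $0$ yields $\bot$ regardless, by absorptivity). Hence both sides equal $\bot$ and the equation holds; note this case alone already covers the second disjunct of the hypothesis.

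Next I would treat the remaining case, namely $P(B) \neq 0$ together with $P(A) \neq 0$ (when $P(B) = 0$ we are already done). Here $P(B)$ is a nonzero ordinary real, so the outer divisions on both sides are by a nonzero element and behave as ordinary field division. The key substantive identity is that $P(B|A) \cdot P(A) = \frac{P(A \wedge B)}{P(A)} \cdot P(A) = P(A \wedge B)$; this requires $P(A) \neq 0$ so that $\frac{P(A\wedge B)}{P(A)}$ is an ordinary value and the cancellation $\frac{x}{P(A)} \cdot P(A) = x$ is valid in $\real_\bot$ (cancellation can fail when the cancelled element is $0$ or the expression is $\bot$, which is exactly why the hypothesis $P(A)\neq 0$ is needed). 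Substituting gives $\frac{P(B|A)\cdot P(A)}{P(B)} = \frac{P(A \wedge B)}{P(B)} = P(A|B)$, as required.

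The main obstacle, and the place where care is genuinely needed, is the cancellation step $\frac{P(A\wedge B)}{P(A)} \cdot P(A) = P(A\wedge B)$: in a common meadow this is \emph{not} an unconditional identity, since $\frac{x}{y}\cdot y$ equals $x$ only when $y \neq 0$ (and when no $\bot$ intrudes), and otherwise collapses to $\bot$ or to $0 \cdot (\text{something})$. This is precisely why the hypothesis must rule out $P(A) = 0$ while $P(B) \neq 0$: in that excluded configuration the right-hand side would read $\frac{\bot \cdot 0}{P(B)}$ and fail to match the genuinely ordinary value $P(A|B) = \frac{P(A\wedge B)}{P(B)}$ on the left. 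I would make explicit that the disjunctive hypothesis is exactly the condition that excludes this single bad configuration, so that the two cases above are jointly exhaustive and the equality holds throughout.
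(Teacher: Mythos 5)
Your proposal is correct and follows essentially the same route as the paper: a case split on the disjuncts, with $P(B)=0$ forcing both sides to $\bot$ and $P(A)\neq 0,\,P(B)\neq 0$ reducing to the ordinary cancellation $\frac{P(A\wedge B)}{P(A)}\cdot P(A)=P(A\wedge B)$; your closing remark about the excluded configuration $P(A)=0,\,P(B)\neq 0$ matches the paper's own argument that the disjunctive hypothesis is also necessary. The only difference is one of emphasis: the paper dismisses the ordinary case as ``obvious'' and dwells on necessity, whereas you spell out the cancellation step — a worthwhile addition, since that is where the hypothesis $P(A)\neq 0$ actually does its work.
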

\begin{proof}
To see that the formula needs a disjunctive condition
we argue as follows. Note first that if $P(A) \neq 0$ and $P(B) \neq 0$ then $\mathsf{PBT}$ is obvious so that 
the only cases requiring attention are with $P(A) = 0$ and/or $P(B)=0$. 

If $P(A) = 0$ and $P(B) = p >0$ then 
$P(A|B)= \frac{0}{p} = 0$ while 
$$\frac{P(B|A) \cdot P(A)}{P(B)} = \frac{(\frac{0}{0})\cdot 0}{p} = \bot$$
so that the need of making use of a condition at least as strong as $P(A) \neq 0 \vee P(B) = 0$ is manifest. 

Sufficiency of the condition is as follows: (i) if $P(B)= 0$ then for all values of $P(A)$ the required conclusion (i.e., $\mathsf{PBT}$) follows, and (ii) if $P(A) \neq 0$ then the only case that needs to be checked is $P(B) = 0$ which we know already to suffice for the intended conclusion.

\end{proof}



\begin{thebibliography}{1}
\bibitem{Anderson2007VG}
J.A.D.W. Anderson, Perspex Machine IX: transreal analysis.
Event: Electronic Imaging 2007, 2007, San Jose, CA, United
\url{http://www.bookofparagon.com/Mathematics/PerspexMachineIX.pdf}{URL}.
%
 \bibitem{AndersonVA2007VG} 
J. A.\ Anderson, N. \ V\"{o}lker, and A. A.\ Adams. 2007.
\newblock Perspecx Machine VIII, axioms of transreal arithmetic.
\newblock In J. Latecki, D. M. Mount and A. Y. Wu (eds), {\em Proc. SPIE 6499. Vision Geometry XV}, 649902, 2007.
%
%
%

\bibitem{Bergstra2019SACS}
J.A.\ Bergstra.
\newblock Adams conditioning and likelihood ratio transfer mediated inference.
\newblock {\em Scientific Annals of Computer Science},  29 (1), 1--58, (2019).

\bibitem{Bayes-Price1763}
 T. Bayes and  R. Price.  
 \newblock An Essay towards solving a problem in the doctrine of chance. By the late Rev. Mr. Bayes, communicated by Mr. Price, in a letter to John Canton, A.M. F.R.S. 
 \newblock {\em Philosophical Transactions of the Royal Society} 53 (1763), 370--418. 
 \url{https://doi.org/10.1098/rstl.1763.0053}{DOI}
 
%
%
\bibitem{BergstraP2015LNCS}
J.A.\ Bergstra and A.\ Ponse.
\newblock Division by zero in common meadows.
\newblock {\em In R. de Nicola and R. Hennicker (editors), Software, Services, and Systems (Wirsing Festschrift),} 
Lecture Notes in Computer Science 8950, pages 46-61, Springer, Berlin, 2015. (2015). 
Improved version: \texttt{arXiv:1406.6878v4 [math.RA]} (2021).


%
%

%

\bibitem{BergstraT2007JACM}
J.A.\ Bergstra and J.V.\ Tucker.
\newblock The rational numbers as an abstract data type.
\newblock {\em Journal of the ACM},  54 (2) (2007), Article 7.



\bibitem{BergstraT2020TM}
J.A.\ Bergstra and J.V.\ Tucker.  
\newblock The transrational numbers as an abstract data type.
\newblock {\em Transmathematica}, (2020).
\url{https://doi.org/10.36285/tm.47}


\bibitem{BergstraT2022TCJ}
J.A.\ Bergstra and J.V.\ Tucker. 
\newblock On the axioms of common meadows: Fracterm calculus, flattening and incompleteness.
\newblock {\em The Computer Journal}, 66 (7)  (2023),  1565-1572,.
\url{https://doi.org/10.1093/comjnl/bxac026}

\bibitem{BergstraT2021b}
J.A.\ Bergstra and J.V.\ Tucker. 2022.
\newblock Totalising partial algebras: Teams and splinters.
\newblock {\em Transmathematica}, 
\url{https://doi.org/10.36285/tm.57}

\bibitem{BergstraT2023}
J.A.\ Bergstra and J.V.\ Tucker. 
\newblock A complete finite axiomatisation of the equational theory of common meadows. 
\newblock {\em ACM Transactions on Computational Logic}, 26 (1), Article 1 (January 2025), 28 pages.
\url{https://doi.org/10.1145/3689211}


\bibitem{BergstraT2024a}
J.A.\ Bergstra and J.V.\ Tucker.
\newblock Rings with common division,  common meadows, and their conditional equational theories.
\newblock {\em The Journal of Symbolic Logic}, 2025
\newblock Published online 2024:1-27. 
\url{https://doi.org/10.1017/jsl.2024.88}

\bibitem{BergstraT2025Entropy}
J.A.\ Bergstra and J.V.\ Tucker.
\newblock On defining expressions for entropy and cross-entropy:  The entropic transreals and its fracterm calculus.
\newblock {\em Entropy},  27 (1) (2025), Article 31. 
\url{https://doi.org/10.3390/e27010031}




%
%

%
%
%

 
 
%
%

%
%
%
%









%




\bibitem{BergstraT2020}
J.A.\ Bergstra and J.V.\ Tucker.  2020.
\newblock The transrational numbers as an abstract data type.
\newblock {\em Transmathematica}, (2020).
\url{https://doi.org/10.36285/tm.47}

%
\bibitem{BergstraT2021a} 
J.A.\ Bergstra and J.V.\ Tucker.  2021.
\newblock The wheel of rational numbers as an abstract data type.
\newblock In Roggenbach M. (editor), {\em Recent Trends in Algebraic Development Techniques. WADT 2020}. 
\newblock Lecture Notes in Computer Science 12669,  Springer, Berlin, 2021, 13--30.

%

%
\bibitem{BergstraT2022SACS}
J.A.\ Bergstra  and J.V. Tucker.
\newblock Which arithmetical data types admit fracterm flattening?
\newblock {\em Scientific Annals of Computer Science},  32 (1), 87--107, (2022).
\url{https://doi.org/10.7561/SACS.2022.1.87}
%

\bibitem{BergstraT2022LNCS}
J.A.\ Bergstra and J.V.\ Tucker.
\newblock Symmetric transrationals: The data type and the algorithmic degree of its equational theory,
\newblock in N. Jansen et al. (eds.) {\em A Journey From Process Algebra via Timed Automata to Model Learning - A Festschrift Dedicated to Frits Vaandrager on the Occasion of His 60th Birthday}, Lecture Notes in Computer Science 13560, 63-80. Springer, Berlin, 2022.
\url{https://doi.org/10.1007/978-3-031-15629-8_4}
%
%


%
\bibitem{BergstraT2024TM}
J.A.\ Bergstra and J.V.\ Tucker.  2024.
\newblock Fracterm calculus for signed common meadows.
\newblock {\em Transmathematica}, (2024).
\url{https://doi.org/10.36285/tm.97}{DOI}

%



%
%




%
%
%
\bibitem{Carlstroem2004MSCS}
J. Carlstr\"{o}m. 2004.
\newblock Wheels -- on division by zero, 
\newblock {\em Mathematical Structures in Computer Science}, 14 (1), (2004), 143-184.



%
%


\bibitem{DosReisGA2016IJAM} 
T. S. dos Reis, W. Gomide, and J.  A.  Anderson. 2016.
\newblock Construction of the transreal numbers and algebraic transfields. 
\newblock{\em IAENG International Journal of Applied Mathematics},  
46 (1) (2016), 11--23.
\newblock \url{http://www.iaeng.org/IJAM/issues_v46/issue_1/IJAM_46_1_03.pdf}{URL}

\bibitem{DosReis2019TM}
T. S. dos Reis.
\newblock Transreal integral.
\newblock {\em Transmathematica}, 2019.
\url{https://doi.org/10.36285/tm.v0i0.13}


\bibitem{EhrichWL1997} 
H-D. Ehrich, M. Wolf, and J. Loeckx. 1997.
\newblock {\em Specification of Abstract Data Types.}
\newblock John Wiley, Chichester, 1997.
%

\bibitem{Hardy1905} 
\newblock G H Hardy, 
\newblock {\em The integration of functions of a single variable}, 
Cambridge Tracts in Mathematics and Mathematical Physics 2, Cambridge,
1905.



%
%

\bibitem{Richardson1968}
\newblock Daniel Richardson,  
\newblock Some undecidable problems involving elementary functions of a real variable. 
{\em Journal of Symbolic Logic}, 33 (4) (1968), 514-520. 
doi:10.2307/2271358.


%



%

\bibitem{Ono1983}
H.\ Ono.
\newblock Equational theories and universal theories of fields.
\newblock {\em Journal of the Mathematical Society of Japan}, 35(2), 289-306, (1983).
%

\bibitem{Padmanabhan1983}
H. Padmanabhan. 
\newblock A self-dual equational basis for Boolean algebras. 
\newblock \emph{Canadian Mathematical Bulletin}, 26(1):9–12, 1983.


\bibitem{Suppes1957}
P.\ Suppes.
\newblock {\em Introduction  to  Logic}.   
\newblock Van  Nostrand  Reinhold  Company. Princeton, (1957).


\bibitem{vanderWaerden1970}
B. L. van der Waerden. 
\newblock {\em Modern Algebra. Volume 1}. 
\newblock  Frederick Ungar Publishing Company, New York, 1970.

\end{thebibliography}
\end{document}